\newcommand{\ignore}[1]{}
\newcommand{\pcode}[2][\codesize]{
    \fbox{
    \begin{minipage}{0.45\linewidth}
    #1
    \begin{tabbing}
    xx \= xx \= xx \= xx \= xx \= xx \= xx \= xx \= xx \= xx \= \kill
    #2
    \end{tabbing}
    \end{minipage}
    }
  }
\title[An iterative approach to precondition inference using CHCs]
{An iterative approach to precondition inference using constrained Horn clauses}
\author[Kafle, Gallagher, Gange, Schachte, S{\o}ndergaard, Stuckey]
{Bishoksan Kafle\\
The University of Melbourne
\and
John P. Gallagher\\
Roskilde University and IMDEA Software Institute
\and 
Graeme Gange\\
Monash University
\and
Peter Schachte, Harald S{\o}ndergaard, Peter J. Stuckey\\
The University of Melbourne
}
\begin{document}
\maketitle

%
\newcommand{\rahft}{\textsc{Rahft}}
\newcommand{\integ}{{\sf int}}
\newcommand{\listint}{{\sf listint}}
\newcommand{\other}{{\sf other}}
\newcommand{\true}{\textsc{True}}
\newcommand{\false}{\textsc{False}}
\newcommand{\trueit}{\mathit{true}}
\newcommand{\falseit}{\mathit{false}}
\newcommand{\falsett}{\mathtt{false}}
\newcommand{\Bin}{{\sf Bin}}
\newcommand{\Dep}{{\sf Dep}}
\newcommand{\g}{{\sf g}}
\newcommand{\nong}{{\sf ng}}
\newcommand{\OL}{{\cal O}}
\newcommand{\M}{{\sf M}}
\newcommand{\R}{{\cal R}}
\newcommand{\A}{\mathcal{A}}

\newcommand{\body}{\mathcal{B}}
\newcommand{\B}{{\cal B}}
\newcommand{\C}{{\cal C}}
\newcommand{\D}{{\cal D}}
\newcommand{\X}{{\cal X}}
\newcommand{\V}{{\cal V}}
\newcommand{\Q}{{\cal Q}}
\newcommand{\F}{{\sf F}}
\newcommand{\N}{{\cal N}}
\newcommand{\Lang}{{\cal L}}
\newcommand{\powerset}{{\cal P}}
\newcommand{\FTA}{{\cal FT\!A}}
\newcommand{\Term}{{\sf Term}}
\newcommand{\Empty}{{\sf empty}}
\newcommand{\nonEmpty}{{\sf nonempty}}
\newcommand{\compl}{{\sf complement}}
\newcommand{\args}{{\sf args}}
\newcommand{\preds}{{\sf preds}}
\newcommand{\gnd}{{\sf gnd}}
\newcommand{\lfp}{{\sf lfp}}
\newcommand{\psharp}{P^{\sharp}}
\newcommand{\minimize}{{\sf minimize}}
\newcommand{\headterms}{\mathsf{headterms}}
\newcommand{\solvebody}{\mathsf{solvebody}}
\newcommand{\solve}{\mathsf{solve}}
\newcommand{\fail}{\mathsf{fail}}
\newcommand{\member}{\mathsf{memb}}
\newcommand{\ground}{\mathsf{ground}}
\newcommand{\abst}{\mathsf{abstract}}
\newcommand{\bodyfacts}{\mathsf{bodyfacts}}
\newcommand{\renameunfold}{\mathsf{renameunfold}}
\newcommand{\rep}{\mathsf{rep}}
\newcommand{\cfacts}{\mathsf{cfacts}}

\newcommand{\raf}{{\sf raf}}
\newcommand{\qa}{{\sf qa}}
\newcommand{\spl}{{\sf split}}

\newcommand{\transitions}{\mathsf{transitions}}
\newcommand{\nonempty}{\mathsf{nonempty}}
\newcommand{\dom}{\mathsf{dom}}

\newcommand{\Args}{\mathsf{Args}}
\newcommand{\id}{\mathsf{id}}
\newcommand{\type}{\tau}
\newcommand{\restrict}{\mathsf{restrict}}
\newcommand{\any}{\top}
\newcommand{\dyn}{\top}
\newcommand{\dettypes}{{\sf dettypes}}
\newcommand{\Atom}{{\sf Atom}}

\newcommand{\chc}{{\sf chc}}
\newcommand{\deriv}{{\sf deriv}}

\newcommand{\vars}{\mathsf{vars}}
\newcommand{\Vars}{\mathsf{Vars}}
\newcommand{\range}{\mathsf{range}}
\newcommand{\varpos}{\mathsf{varpos}}
\newcommand{\varid}{\mathsf{varid}}
\newcommand{\argpos}{\mathsf{argpos}}
\newcommand{\elim}{\mathsf{elim}}
\newcommand{\pred}{\mathsf{pred}}
\newcommand{\predfuncs}{\mathsf{predfuncs}}
\newcommand{\project}{\mathsf{project}}
\newcommand{\reduce}{\mathsf{reduce}}
\newcommand{\positions}{\mathsf{positions}}
\newcommand{\contained}{\preceq}
\newcommand{\equivalent}{\cong}
\newcommand{\unify}{{\it unify}}
\newcommand{\Iff}{{\rm iff}}
\newcommand{\Where}{{\rm where}}
\newcommand{\qmap}{{\sf qmap}}
\newcommand{\fmap}{{\sf fmap}}
\newcommand{\ftable}{{\sf ftable}}
\newcommand{\Qmap}{{\sf Qmap}}
\newcommand{\states}{{\sf states}}
\newcommand{\head}{\tau}
\newcommand{\atomconstraints}{\mathsf{atomconstraints}}
\newcommand{\thresholds}{\mathsf{thresholds}}
\newcommand{\term}{\mathsf{Term}}
\newcommand{\trees}{\mathsf{trees}}
\newcommand{\renames}{\rho}
\newcommand{\renameps}{\rho_2}
\newcommand{\predicates}{\mathsf{Predicates}}
\newcommand{\query}{\mathsf{q}}
\newcommand{\ans}{\mathsf{a}}
\newcommand{\trace}{\mathsf{tr}}
\newcommand{\constr}{\mathsf{constr}}
\newcommand{\Iproj}{\mathsf{proj}}
\newcommand{\SAT}{\mathsf{SAT}}
\newcommand{\interpolant}{\mathsf{interpolant}}
\newcommand{\unknown}{?}
\newcommand{\rhs}{{\sf rhs}}
\newcommand{\lhs}{{\sf lhs}}
\newcommand{\unfold}{{\sf unfold}}
\newcommand{\arity}{{\sf ar}}
\newcommand{\AND}{{\sf AND}}

\newcommand{\feasible}{{\sf feasible}}
\newcommand{\infeasible}{{\sf infeasible}}
\newcommand{\safe}{{\sf safe}}
\newcommand{\unsafe}{{\sf unsafe}}
\newcommand{\swp}{{\sf presafe}} %
\newcommand{\spec}{{\sf spec}} %
\newcommand{\theory}{\mathbb{T}}

\newcommand{\atmost}[1]{\le #1}
\newcommand{\exactly}[1]{=#1}
\newcommand{\exceeds}[1]{>#1}
\newcommand{\anydim}[1]{\ge 0}

\def\ll{[\![}
\def\rr{]\!]}

\newcommand{\sset}[2]{\left\{~#1  \left|
                               \begin{array}{l}#2\end{array}
                          \right.     \right\}}

\newcommand{\qin}{\hspace*{0.15in}}
\newenvironment{SProg}
     {\begin{small}\begin{tt}\begin{tabular}[t]{l}}%
     {\end{tabular}\end{tt}\end{small}}
\def\anno#1{{\ooalign{\hfil\raise.07ex\hbox{\small{\rm #1}}\hfil%
        \crcr\mathhexbox20D}}}

\newtheorem{definition}{Definition}
\newtheorem{example}{Example}
\newtheorem{corollary}{Corollary}

\newtheorem{lemma}{Lemma}
\newtheorem{theorem}{Theorem}
\newtheorem{proposition}{Proposition}
\newtheorem{property}{Property}

\newcommand{\tuplevar}[1]{\mathbf{#1}}
 
\begin{abstract}
    %
We present a method for automatic inference of conditions on the initial states of a program that guarantee that the safety assertions in the program are not violated. Constrained Horn clauses (CHCs) are used to model the program and assertions in a uniform way, and we use standard abstract interpretations to derive an over-approximation of the set of \emph{unsafe} initial states. The precondition then is the constraint corresponding to the complement of that set, under-approximating the set of \emph{safe} initial states. This idea of complementation is not new, but previous attempts to exploit it have suffered from loss of precision. Here we develop an iterative 
specialisation algorithm 
to give more precise, and in some cases optimal safety conditions.  
The algorithm combines existing transformations, namely constraint specialisation, partial evaluation and a trace elimination transformation. The last two of these transformations
perform polyvariant specialisation, leading to disjunctive constraints which improve precision.  
The algorithm is implemented and tested on a benchmark suite of programs from the literature in precondition inference and software verification competitions. 

Under consideration for acceptance in TPLP.
 \end{abstract}
  \begin{keywords}
    Precondition inference, backwards analysis,  abstract interpretation,  refinement, program specialisation, program transformation.
  \end{keywords}

%
\section{Introduction}\label{sec:intro}

Given a program with properties required to hold at specific
program points, \emph{precondition analysis} derives the conditions 
on the initial states ensuring that the properties hold.  
This has important applications in program verification, 
symbolic execution, program understanding and debugging.
While forward abstract interpretation approximates the set of reachable 
states of a program, \emph{backward} abstract interpretation
approximates the set of states that can reach some target state.  
Both forward and backward analyses may produce 
over- or under-approximations, and
forward and backward analysis may profitably be combined
\cite{CousotCousot92-3,CCL_VMCAI11,DBLP:conf/sas/BakhirkinM17}.

Most approaches that apply backward analysis, possibly in conjunction
with forward analysis, use over-approximations, and as a result
derive \emph{necessary} pre-conditions.
Less attention has been given to under-approximating
backwards analyses, with the goal of finding \emph{sufficient} 
pre-conditions.
However, it is natural to try to derive
guarantees of safe behaviour of a program.
Often we would like to know which initial states \emph{must} be safe,
in the sense that no computation starting from such a state
can possibly reach a specified error state, that is,
we desire to find (non-trivial) sufficient conditions for safety.  

If analysis uses an abstract domain which is \emph{complemented},
duality enables
sufficient conditions to be derived from necessary conditions and 
\emph{vice versa}.
However, complemented abstract domains are very rare,
and approximation of a complement tends to introduce
considerable lack of precision.
The under-approximating backward abstract interpretation 
of \citeN{DBLP:conf/lopstr/HoweKL04} utilises the fact that the abstract
domain \emph{Pos} is pseudo-complemented~\cite{Marriott-Sondergaard-LOPLAS93},
but pseudo-complementation too is very rare.
\citeN{DBLP:conf/vmcai/Moy08} presents a method for deriving 
sufficient preconditions (for use with a theorem prover), 
employing weakest-precondition
reasoning and forward abstract interpretation to attempt to generalise
conditions at loop heads.
\citeN{DBLP:conf/sas/BakhirkinBP14} observe that there may be an
advantage in generalising an abstract complement operation to 
(abstract) logical subtraction, as this can improve opportunities
to find a tighter approximation of a set of states. 

\citeN{DBLP:journals/entcs/Mine12} infers sufficient conditions for safety,
not by instantiating a generic mechanism for complementation, but by
designing all required purpose-built backward transfer functions.
He does this for three numeric
abstract domains: intervals, octagons and convex polyhedra---a
substantial effort, as the purpose-built operations, 
including widening, can be rather intricate.

We share Min\'e's goal but use
program transformation and over-approximating abstract interpretation 
over a Horn clause program representation.
This allows us to apply a range of established tools and techniques 
beyond abstract interpretation, including
query-answer transformation, partial evaluation and abstraction refinement. 
We offer an iterative approach  that successively specialises 
a program.
The approach of iteratively specialising a program represented as
Horn clauses has also been
pursued by \citeN{DBLP:journals/scp/AngelisFPP14}
in order to verify program properties.
Their techniques also incorporate forward and backward propagation
of constraints, but rather than explicitly using abstract interpretation, their
specialisation algorithm involves a special constraint generalisation method.

We shall use the example in Figure~\ref{ex:backward} to demonstrate 
our approach. 
The left side shows a C program fragment, and the right its constrained 
Horn clause (CHC) representation. 
CHCs  can 
be obtained from an imperative program (containing assertions) using various approaches 
\cite{Peralta-Gallagher-Saglam-SAS98,DBLP:conf/pldi/GrebenshchikovLPR12,DBLP:conf/cav/GurfinkelKKN15,DBLP:journals/scp/AngelisFPP17}.  The A set of CHCs is not necessarily intended as an executable logic program;  in Figure~\ref{ex:backward} the predicates capture the reachable states of the computation. For example, $\mathtt{while(1,0)}$ is true if the \textbf{while} statement is reached with $a=1$ and $b=0$.  The predicate $\falsett$ represents an error state.
 Henceforth whenever we refer to a program, we refer to its CHC version. 

\begin{figure}[t]
  \begin{center}
  \begin{tabular}{ll}
    \pcode[\small]{
    $\textbf{int}~ a, b;$ \\
      \textbf{if} $(a \leq 100)$ \\
       ~ $a = 100-a$; \\
      \textbf{else} 
       $a=a-100;$ \\
      \textbf{while} ($a \geq 1 $)  \\
      	~ \{$a=a-1;$ 
      	 $b=b-2;\}$ \\
      \textbf{assert}$(b \neq 0);$
    } &
    \pcode[\small]{
$\mathtt{c1.~ init(A,B) \leftarrow true.}$ \\
$\mathtt{c2.~ if(A,B) \leftarrow A_0 \leq 100, A=100-A_0, init(A_0,B).}$\\
$\mathtt{c3.~ if(A,B) \leftarrow A_0 \geq101, A=A_0-100, init(A_0,B).}$\\
$\mathtt{c4.~ while(A,B) \leftarrow if(A,B).}$\\
$\mathtt{c5.~ while(A,B) \leftarrow A_0\geq1, A=A_0-1, B=B_0-2,}$ \\
$\mathtt{~~~~~~~~~~~~~~~~~~~~~~~~~~~while(A_0,B_0).}$\\
$\mathtt{c6.~ false\leftarrow A\leq0, B=0, while(A,B).}$
    } 
  \end{tabular}
  \end{center}
  \caption{Running example: (left) original program, (right) translation to
   CHCs}
 \label{ex:backward}
\end{figure}
For the given program, we want to ensure that $b$ is non-zero after the loop. 
The goal is to derive initial conditions on $a$ and $b$, 
sufficient to ensure that the assertion is never violated. 
The practical use of the conditions is to reject unsafe initial states 
before running the program.
We note that the assertion will 
\emph{not} be violated provided the following three conditions are met:
 (i)~if $a=100$ then $b \neq 0$, 
 (ii)~if $a<100$ then
  $2a \neq 200 - b$ and 
 (iii)~if $a>100$ then $2a \neq 200+b$.
The conjunction of these three conditions, or equivalently $b \neq
 |2a-200|$, ensures that the assertion is never violated.  
Automating the required reasoning is challenging because: 
(i)   the desired result is a disjunctive
      constraint over expressions that need an expressive domain; 
(ii)  the disjuncts cannot be represented as intervals, octagons or difference 
      bound matrices \cite{DBLP:journals/lisp/Mine06}; 
(iii) information has to be propagated forwards and backwards because we 
      lose information on $b$ and $a$ in the forward and in the backward direction respectively.
In what follows, we show how to derive the conditions automatically. 

The key contribution of this paper is 
a framework for deriving sufficient preconditions without a need
to calculate weakest preconditions or rely on abstract domains with special
properties or intricate transfer functions.
This is achieved through a combination of program transformation and
abstract interpretation, with the
derived preconditions being successively refined through iterated
transformation.

After Section~\ref{sec:prelim}'s preliminaries, we discuss, in 
Section~\ref{sec:specialisation}, the required transformation techniques.
Section \ref{sec:precond_weaker} gives iterative refinement algorithms 
that derive successively better (weaker) preconditions.
Section~\ref{sec:experiments} is an account of experimental evaluation, 
demonstrating practical feasibility of the technique.
Section \ref{sec:conclusion} concludes.

 
%

\section{Preliminaries}
\label{sec:prelim}

An atomic formula, or simply \emph{atom}, is a formula $p(\tuplevar{x})$ 
where $p$ is a predicate symbol and $\tuplevar{x}$ a tuple of arguments. 
A constrained Horn clause (CHC) is a first-order predicate logic formula 
of the form $\forall \tuplevar{x}_0 \ldots \tuplevar{x}_k (p_1(\tuplevar{x}_1)
\wedge \ldots \wedge p_k(\tuplevar{x}_k) \wedge \phi \rightarrow
p_0(\tuplevar{x}_0))$, 
where $\phi$ is a finite conjunction of quantifier-free \emph{constraints} 
on variables $\tuplevar{x}_i$ with respect to some constraint theory 
$\theory$, $p_i(\tuplevar{x}_i)$ are atoms, $p_0(\tuplevar{x}_0)$ is the 
\emph{head} of the clause and 
$p_1(\tuplevar{x}_1) \wedge \ldots \wedge p_k(\tuplevar{x}_k) \wedge \phi$ 
is the \emph{body}. 
Following the conventions of Constraint Logic Programming (CLP), 
such a clause is written as $p_0(\tuplevar{x}_0) \leftarrow \phi, 
p_1(\tuplevar{x}_1), \ldots, p_k(\tuplevar{x}_k)$. 
For concrete examples of CHCs we use Prolog-like syntax and typewriter font, 
with  capital letters for variable names and linear arithmetic constraints 
built with predicates $\mathtt{\le,\ge,<,>,=}$.
  
An \emph{Integrity constraint} is a special kind of clause whose head is the 
predicate $\mathtt{false}$. 
A \emph{constrained fact} is a clause of the form 
$p_0(\tuplevar{x}_0) \leftarrow \phi$. 
A set of CHCs is also called a program. 

Figure \ref{ex:backward} (right) contains an example of a set of 
constrained Horn clauses.  
The first five clauses define the behaviour of the program in 
Figure~\ref{ex:backward} (left) and the last clause represents a 
property of the program (that the variable $B$ is non-zero after 
executing the program) expressed as an integrity constraint.  

\paragraph{CHC semantics.} 
The semantics of CHCs is obtained using 
standard concepts from predicate logic semantics.  
An \emph{interpretation} assigns to each predicate a relation over the
domain of the constraint theory $\theory$. 
The predicate $\falsett$ is always interpreted as $\falseit$. 
We assume that $\theory$ is equipped with a decision procedure and a 
projection operator, and that it is closed under negation.
We use notation $\phi|_{V}$ to represent the constraint formulae $\phi$
projected onto variables $V$.

An interpretation \emph{satisfies} a set of formulas if each formula in 
the set evaluates to $\trueit$ in the interpretation in the standard way.
In particular, \emph{a model} of a set of CHCs is an interpretation in 
which each clause evaluates to $\trueit$. 
A set of CHCs $P$ is \emph{consistent} if and only if it has a model. 
Otherwise it is \emph{inconsistent}. 

When modelling safety properties of systems using CHCs, the consistency 
of a set of CHCs corresponds to \emph{safety} of the system.  
Thus we also refer to CHCs as being \emph{safe} or \emph{unsafe} when 
they are consistent or inconsistent respectively.

\paragraph{AND-trees and trace trees.} 
Derivations for CHCs are represented by AND-trees.
The following definitions %
are adapted from 
\citeN{Gallagher-Lafave-Dagstuhl}.

An \emph{AND-tree} for a set of CHCs is a tree whose nodes are 
labelled as follows.
\begin{enumerate}
\item
each non-leaf node corresponds to a clause (with variables suitably renamed) 
of the form $A \leftarrow \phi, A_1,\ldots,A_k$ and is labelled by the 
atom $A$ and $\phi$, and has children labelled by $A_1,\ldots,A_k$;
\item
each leaf node corresponds to a clause of the form $A \leftarrow \phi$ 
(with variables suitably renamed) and is labelled by
the atom $A$ and $\phi$; and
\item each node is labelled with the clause identifier of the 
corresponding clause.
\end{enumerate}
\noindent
Of particular interest are AND-trees with their roots labelled by the atom 
$\falseit$; these are called counterexamples.
A \emph{trace tree} is the result of removing all node labels from an 
AND-tree apart from the clause identifiers.
Given an AND-tree $t$, $\constr(t)$ represents the conjunction of the 
constraints in its node labels. 
The tree $t$ is \emph{feasible} if and only $\constr(t)$ is satisfiable 
over $\theory$. 
We also represent a conjunction of constraints as a set of constraints, 
for example, $a=0 \wedge b\geq 1$ as $\{ a=0, b\geq 1\}$. 
\begin{definition}\label{provable}
For an atom $p(\tuplevar{x})$ and a set of CHCs \(P\) we write 
$P \vdash_{\theory} p(\tuplevar{x})$ if there exists a feasible 
AND-tree with root labelled by $p(\tuplevar{x})$. 
\end{definition}

The soundness and completeness of derivation trees~\cite{JMMS} implies that 
$P$ is inconsistent if and only if $ P \vdash_{\theory} \mathtt{false}$.

\begin{wrapfigure}[13]{r}{0.22\textwidth}
\vspace{-1ex}
\small
\begin{tikzpicture}
  [auto,
   treenode/.style ={rectangle, draw, thick, text width=7.5em, text centered, 
                     rounded corners, minimum height=3em
                    },
   connector/.style={draw, thick}
  ]

  \matrix [row sep=1.0em]
  {
  \node [treenode] (c6) 
    {\begin{tabular}{c}\texttt{c6: false},\\$\mathtt{A} \leq 0 \land \mathtt{B} = 0$\end{tabular}};
  \\ \node [treenode] (c4) 
    {\begin{tabular}{c}\texttt{c4: while(A,B)},\\$\trueit$\end{tabular}};
  \\ \node [treenode] (c2) 
    {\begin{tabular}{c}\texttt{c2: if(A,B)},\\$\mathtt{A} \leq 100 \land$ \\$ \mathtt{A} = 100 - \mathtt{C}$\end{tabular}};
  \\ \node [treenode] (c1) 
    {\begin{tabular}{c}\texttt{c1: init(C,B)},\\$\trueit$\end{tabular}};
  \\
  };
  \begin{scope}[every path/.style=connector]
    \path (c6) -- (c4);
    \path (c4) -- (c2);
    \path (c2) -- (c1);
  \end{scope}
\end{tikzpicture}
\end{wrapfigure}
On the right is  an AND-tree corresponding to the derivations of 
$\mathtt{false}$ using the clauses \texttt{c6} followed by \texttt{c4, c2} 
and \texttt{c1} from the program in Figure \ref{ex:backward} (right).

\begin{definition}[Initial clauses and nodes]\label{def:initial-node}
Let $P$ be a set of CHCs, with a
{distinguished} predicate 
$p^{I}$ in $P$ which we call the \emph{initial predicate}.
The \emph{constrained facts}
$\{(p^I(\tuplevar{x}) \leftarrow \theta) 
\mid (p^I(\tuplevar{x}) \leftarrow \theta) \in P\}$ are called
the \emph{initial clauses} of $P$.
Let $t$ be an AND-tree for $P$.
A node labelled by an identifier of the clause 
$p^I(\tuplevar{x}) \leftarrow \theta$ is an \emph{initial node} of $t$.
We extend the term ``initial predicate" and use the symbol $p^I$ to 
refer also to renamed versions of the initial predicate that arise 
during clause transformations.
\end{definition}

 
%
\section{Precondition Inference}\label{sec:precond}

This section describes an approach to precondition generation. 
We limit our attention to sets of clauses for which every AND-tree for 
$\falsett$ (whether feasible or infeasible) has at least one initial node. 
Although it is not decidable for an arbitrary set of CHCs $P$ whether every 
derivation of $\falsett$ uses the initial predicate, 
the above condition on AND-trees
can be checked syntactically from the predicate dependency graph for $P$. 

\begin{definition}[Safe precondition]\label{def:precond}

Let $P$ be a set of CHCs.
Let $\phi$ be a constraint over $\theory$, and let $P'$ be the set of 
clauses obtained from $P$ by replacing the initial clauses 
$\{(p^I(\tuplevar{x}) \leftarrow \theta_i) \mid 1 \le i \le k\}$ by 
$\{(p^I(\tuplevar{x}) \leftarrow \theta_i \wedge \phi) \mid 1 \le i \le k\}$. 
Then $\phi$ is a \emph{safe precondition} for $P$ if 
$P' \not\vdash_{\theory} \falsett$.  
 
\end{definition}

Thus a safe precondition is a constraint that, 
when conjoined with the constraints on the initial predicate,
is sufficient to block derivations of $\falsett$ (given that we assume 
clauses for which $p^I$ is essential for any derivation of $\falsett$).

Ideally we would like to find the most general, or \emph{weakest} safe precondition.  
It is not computable in general, so we aim to find a condition that is as weak as possible.
The constraint $\falseit$ is always a safe precondition, albeit an uninteresting one. 
On the other hand, if $ P \not\vdash_{\theory} \falsett$ then any constraint, including $\trueit$, is a safe precondition for $P$.  
 
We first show how a safe precondition can be derived from a set of clauses.

\begin{definition}[Safe precondition $\swp(P)$ extracted from a set $P$ of clauses] 
\label{def:extract-safe} 
Let $P$ be a set of clauses. The safe precondition $\swp(P)$ is defined as:
\[
\swp(P)= \neg\bigvee \{\theta \mid  (p^I(\tuplevar{x}) \leftarrow \theta) \in P\}. 
\]
\end{definition}
\noindent
$\swp(P)$ is clearly a safe precondition for $P$ since
for each initial clause $p^I(\tuplevar{x}) \leftarrow \theta$ the conjunction $\swp(P) \wedge \theta$ is $\falseit$.
This precondition trivially blocks any derivation of $\falsett$ since we assume that every derivation of $\falsett$ uses
an initial clause.
We next show how to construct a sequence 
$P_0,P_1,\ldots,P_m$ where $P = P_0$ and each element of the sequence is more specialised with respect to
derivations of $\falsett$, and as a consequence, the constraints in the initial
clauses are stronger. Applying Definition \ref{def:extract-safe} to $P_m$ thus yields a weaker
safe precondition for $P$.

 
%
\subsection{Specialisation of Clauses}
\label{sec:specialisation}

\begin{definition}[Specialisation transformation]\label{def:spec}
Let $P$ be a set of clauses, and let $A$ be an atom. 
We write $P \Longrightarrow_A P'$  for a \emph{specialisation transformation} of $P$ with respect to $A$, yielding
a set of clauses $P'$, such that the following holds.
\begin{itemize}
\item
$P \vdash_{\theory} A$ if and only if $P' \vdash_{\theory} A$; and
\item
if $(p^I(\tuplevar{x}) \leftarrow \theta) \in P$ then there exists an initial clause $(p^I(\tuplevar{x}) \leftarrow \phi) \in P'$ such that $ \models_{\theory} \phi \rightarrow \theta$.
\end{itemize}
\end{definition}
Note that a specialisation requires not only that derivations of $A$ are preserved, but also that the initial clauses are preserved and
possibly strengthened.

\begin{lemma}\label{lemma:strengthen}
Let $P \Longrightarrow_{\mathtt{false}} P'$ be a specialisation transformation with respect to $\mathtt{false}$.
Then $\models_{\theory} \swp(P) \rightarrow \swp(P')$.
\end{lemma}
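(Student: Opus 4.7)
My plan is to unfold Definition~\ref{def:extract-safe} on both sides and reduce the claim to a first-order entailment between the initial constraints. Write $\{\theta_i\}_{i=1}^n$ for the constraints of the initial clauses of $P$ and $\{\phi_j\}_{j=1}^m$ for those of $P'$. Then $\swp(P) = \neg\bigvee_i \theta_i$ and $\swp(P') = \neg\bigvee_j \phi_j$, so by contraposition the goal $\models_{\theory} \swp(P) \rightarrow \swp(P')$ is equivalent to $\models_{\theory} \bigvee_j \phi_j \rightarrow \bigvee_i \theta_i$.

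To prove this entailment I would establish a ``clause-to-clause'' matching in the direction needed: for every initial clause $(p^I(\tuplevar{x}) \leftarrow \phi_j) \in P'$, exhibit some initial clause $(p^I(\tuplevar{x}) \leftarrow \theta_i) \in P$ with $\models_{\theory} \phi_j \rightarrow \theta_i$. Once that is in place, the rest is routine: any model $\sigma$ of some $\phi_j$ then satisfies the corresponding $\theta_i$, hence $\bigvee_i \theta_i$, which gives the disjunctive implication above and therefore the lemma.

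The main obstacle is precisely this matching step. The second bullet of Definition~\ref{def:spec} gives only the converse: for each initial clause of $P$, some initial clause of $P'$ strengthens it. What is required here is the dual property---every initial clause of $P'$ is itself a strengthened variant of some initial clause of $P$. This is the natural operational content of ``specialisation'' (clauses in $P'$ arise from clauses in $P$ by conjoining additional constraints, renaming, or splitting a clause into several cases), and for the concrete transformations assembled in Section~\ref{sec:specialisation}---constraint specialisation, partial evaluation and the trace elimination transformation---it is direct to verify by inspection of the construction. I would therefore either read it as implicit in Definition~\ref{def:spec} or discharge it as an auxiliary property of the particular transformations being composed, after which the conclusion follows by the short propositional calculation above.
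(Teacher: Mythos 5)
Your proof is correct and follows essentially the same route as the paper's, whose entire argument is that the lemma ``follows immediately'' from Definitions~\ref{def:extract-safe} and~\ref{def:spec}: unfold $\swp$ on both sides, contrapose to $\models_{\theory} \bigvee_j \phi_j \rightarrow \bigvee_i \theta_i$, and discharge this by a clause-to-clause matching. The caveat you raise is well observed, and it points to an imprecision in the paper's Definition~\ref{def:spec} rather than to a defect in your argument: as literally stated, the second bullet quantifies in the wrong direction (each initial clause of $P$ is strengthened by \emph{some} initial clause of $P'$), which by itself would not exclude $P'$ containing an additional initial clause with a weaker constraint, in which case $\swp(P')$ could fail to be implied by $\swp(P)$. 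The intended reading --- every initial clause $(p^I(\tuplevar{x}) \leftarrow \phi) \in P'$ has $\models_{\theory} \phi \rightarrow \theta$ for some initial clause $(p^I(\tuplevar{x}) \leftarrow \theta) \in P$ --- is exactly the dual property you supply, and it is what is actually established for the concrete transformations (compare the proof of Lemma~\ref{lemma:pe}, which argues that the result contains only the original initial clauses with possibly strengthened constraints, and Lemma~\ref{lemma:te-feasible}, where the initial clauses are merely renamed copies). With that reading in place, your short propositional calculation completes the proof exactly as the paper intends.
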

\begin{proof} This follows immediately from Definitions \ref{def:extract-safe}  and \ref{def:spec}. 
\end{proof}
We now present specific transformations for CHCs that satisfy Definition \ref{def:spec}. Applying these transformations
 enables the derivation of more precise safe preconditions. These are adapted from established techniques from the literature on CLP and Horn clause verification and analysis. 

%
\subsubsection{Specialising CHCs by Partial Evaluation (PE)}
\label{pe}

Partial evaluation  \cite{Jones-Gomard-Sestoft} 
is a transformation that
specialises a program with respect to a given input. 
The ``input" for partial evaluation of a set of CHCs $P$ is a 
(set of) constrained atom(s) $A \leftarrow \theta$. 
The result of partial evaluation is a set of CHCs $P'$ preserving the 
derivations of every instance of $A$ that satisfies $\theta$, that is, 
$P' \vdash A\phi$ if and only if $P \vdash  A\phi$
whenever $\theory \models \theta\phi$. 
The partial evaluation algorithm described here is an instantiation of the
``basic algorithm'' for partial evaluation of logic programs in \citeN{gallagher:pepm93}. 

The basic algorithm can be presented as the computation of 
the limit of the increasing sequence $S_0, S_1, S_2,\ldots$, where $S_0$ is the set of input constrained atoms and for $i\ge 0$, 
$S_{i+1} = S_0 \cup \abst_{\Psi}(\unfold_P(S_i))$. The  ``unfolding rule'' $\unfold_P$ and the abstraction operation 
$\abst_{\Psi}$ are parameters of the algorithm. 
For the algorithm used in this paper, the unfolding rule $\unfold_P(S)$ takes a set of constrained facts $S$, 
and ``partially evaluates" each element of $S$,
using the following procedure. For each $(p(\tuplevar{x}) \leftarrow \theta) \in S$, first construct the set of clauses $p(\tuplevar{x}) \leftarrow \psi' \wedge B'$ where $p(\tuplevar{x}) \leftarrow \psi \wedge B$ is a clause in $P$, and $\psi' \wedge B'$ is obtained by unfolding $\psi \wedge \theta \wedge B$ by selecting atoms so long as they are
deterministic (atoms defined by a single clause) and is not a call to an initial predicate 
or a recursive predicate, and $\psi'$ is satisfiable in $\theory$. Unfolding with this rule is guaranteed to terminate;   
$\unfold_P(S)$ returns the set of constrained facts $q(\tuplevar{y}) \leftarrow \psi'\vert_{\tuplevar{y}}$ where $q(\tuplevar{y})$ is an atom in $B'$.

The abstraction operation $\abst_{\Psi}$ ensures that the sequence $S_0, S_1, \ldots$ has a finite limit. It  performs property-based abstraction \cite{GrafS97-predabs} of a set of constrained facts with respect to a finite set of properties $\Psi$ (also a finite set of constrained facts). Then $\abst_{\Psi}(S)$ is defined as follows.
$$\begin{array}{lll}
\abst_{\Psi}(S) &=& \{ \rep_{\Psi}(p(\tuplevar{x}) \leftarrow \theta) \mid (p(\tuplevar{x}) \leftarrow \theta) \in S \},
\mathit{~~where}\\
\rep_{\Psi}(p(\tuplevar{x}) \leftarrow \theta) &=& p(\tuplevar{x}) \leftarrow \bigwedge \{ \psi
\mid( p(\tuplevar{x}) \leftarrow \psi) \in \Psi,  \theory \wedge \theta \models \psi \}\\
\end{array}$$
\noindent
The effect of $\abst_{\Psi}(S)$ is to generalise each $q(\tuplevar{y}) \leftarrow \theta \in S$ to $q(\tuplevar{y}) \leftarrow \psi$, where $\psi$ is the conjunction of properties in $\Psi$ that are implied by $\theta$. Thus only a finite number of ``versions" of $q(\tuplevar{y})$ can be generated,  ensuring that the size of the sets $S_i$ is finite (at most $2^{\vert\Psi\vert}$).  
The larger $\Psi$ is, the more versions can be produced.  More versions could cause overhead without necessarily giving more specialisation; for example, several essentially identical definitions of predicates could be produced.
Thus it is important to choose $\Psi$ taking into account both 
precision and efficiency.

In the implemented algorithm, $\Psi$ consists of the following constrained facts, generated from each clause $p(\tuplevar{x}) \leftarrow \phi,p_1(\tuplevar{x}_1),\ldots,p_n(\tuplevar{x}_n) \in P$. 
\begin{itemize}
\item
For $1 \le i \le n$, $p_i(\tuplevar{x}_i) \leftarrow \phi\vert_{\tuplevar{x}_i}$ and for each $z \in \tuplevar{x}_i$, $p_i(\tuplevar{x}_i) \leftarrow \phi\vert_{\{z\}}$.
\item
$p(\tuplevar{x}) \leftarrow \phi\vert_\tuplevar{x}$ and for each $z \in \tuplevar{x}$, $p(\tuplevar{x}) \leftarrow \phi\vert_{\{z\}}$.

\end{itemize}
The first set of constrained facts distinguishes different 
call contexts, while the second set distinguishes answers. Constraints on individual variables are extracted.
This choice of $\Psi$ was found by experiment to be a good compromise between precision and efficiency, but further experiment and analysis is needed.

Each $S_i$ in the sequence gives rise to a set of clauses 
$\renameunfold_{\Psi,P}(S_i)$, which applies the unfolding rule to each element of $S_i$ and 
renames the predicates in the resulting clauses 
according to the different versions produced by $\abst_{\Psi}$. 
The predicate $\falsett$ is not renamed.  The result returned by partial evaluation is $\renameunfold_{\Psi,P}(S_k)$, where $S_k$ is the limit
of the sequence.

\newcommand{\myleftarrow}{\leftarrow}

\begin{example}
Consider the partial evaluation of the clauses in Figure \ref{ex:backward}.  
$S_0 = \{\falsett \leftarrow \trueit\}$ and $\Psi$ consists of the following 
nine constrained facts extracted from the clauses as explained above: 

\[
\left\{
\begin{array}{l}
\mathtt{if(A,B) \leftarrow A \ge 0. \ 
if(A,B) \leftarrow A \ge 1. \
init(A,B) \leftarrow A \le 100.} \\
\mathtt{init(A,B) \leftarrow A \ge 101. \ 
while(A,B) \leftarrow A \ge 0. \ 
while(A,B) \leftarrow A \ge 1.} \\
\mathtt{while(A,B) \leftarrow A \le 0 \wedge B=0. \ 
while(A,B) \leftarrow A \le 0. \ 
while(A,B) \leftarrow B=0.}
\end{array}
\right\}
\]

Partial evaluation of the clauses generates the clauses $R_0,R_1,\ldots$ and 
sets of constrained facts $S_0, S_1,\ldots$ as shown in Figure \ref{run_pe}.

Note that three versions of the $\mathtt{init}$ predicate are generated 
(from the new constrained facts generated in steps 3 and 4), 
each having different constraints. As we will see in the next section, 
this allows the extraction of more precise preconditions for safety of 
the clauses than could be obtained from the original clauses. 

\end{example}

\begin{figure}[t]
\begin{tabular}{|l|l|l|}
 \cline{1-3}
$i$ & $S_i$ & \textbf{$R_i=\renameunfold_{\Psi,P}(S_i)$} \\ \cline{1-3} \cline{1-3}
0  & $\begin{aligned} S_0=\{ \falsett \leftarrow \trueit\} \end{aligned}$ & $\begin{aligned}  R_0=  \{\mathtt{false \myleftarrow A \le 0,B=0,while\_7(A,B)}. \}\end{aligned}$ \\ \cline{1-3}
1  & $\begin{aligned} S_1= S_0 ~\cup \\ \{ \mathtt{while(A,B) \leftarrow A \le 0,B=0.} \} \end{aligned}$ & $\begin{aligned} R_1= R_0 ~\cup  \\ \{ \mathtt{while\_7(A,B) \myleftarrow A \le 0,B=0, if\_6(A,B).} \\  \mathtt{while\_7(A,B) \myleftarrow A=0,B=0,}  \\  \mathtt{C=1,D=2,while\_5(C,D)}.\}  \end{aligned}$ \\ \cline{1-3}

2  & $\begin{aligned} & S_2= S_1~ \cup \\ & \{\mathtt{while(A,B) \leftarrow A\ge 1.} \\ & \mathtt{if(A,B) \leftarrow \trueit .} \}\end{aligned}$ & $\begin{aligned}&R_2= R_1 \cup  \{ \mathtt{while\_5(A,B) \myleftarrow  A\ge 1,if\_2(A,B).} \\ & 
\mathtt{while\_5(A,B) \myleftarrow  A\ge 1,C-A= 1,D-B=2,} \\ & ~~~~~~~~~~~~~~~~~~~~~~\mathtt{while\_5(C,D)}. \\ &
\mathtt{if\_6(A,B) \myleftarrow  A\ge 0,A+C=100,init\_4(C,B).} \\ &
\mathtt{if\_6(A,B) \myleftarrow  A\ge 1,C-A= 100,init\_3(C,B).} \}
 \end{aligned}$ \\ \cline{1-3}
3  &  $\begin{aligned} & S_3= S_2 \cup \{ \mathtt{if(A,B) \leftarrow A\ge 1.} \\ & \mathtt{init(A,B) \leftarrow A\ge 101.} \\ & \mathtt{init(A,B) \leftarrow A\leq  100.} \}
 \end{aligned}$ & $\begin{aligned}& R_3= R_2 ~\cup \\ &
 \{ \mathtt{if\_2(A,B) \myleftarrow  A\ge 1,A+C=100,init\_1(C,B).} \\ &
\mathtt{if\_2(A,B) \myleftarrow  A\ge 1,C-A= 100,init\_3(C,B).} \\ &
\mathtt{init\_4(A,B) \myleftarrow  A\leq  100. } \\ &
\mathtt{init\_3(A,B) \myleftarrow  A\ge 101.} \}
\end{aligned}$ \\ \cline{1-3}
 
4 & $\begin{aligned} S_4= S_3 ~\cup \\  \{ \mathtt{init(A,B) \leftarrow  A\leq  99.} \} \end{aligned}$ & $\begin{aligned} & R_4= R_3 \cup 
\{ \mathtt{init\_1(A,B) \myleftarrow  A\leq 99.} \}
 \end{aligned}$ \\ \cline{1-3}

5 & $S_5= S_4$ & $R_5= R_4 $ \\ \cline{1-3}
\end{tabular}
\caption{Steps performed during the run of  partial evaluation}
\label{run_pe}
\vspace{1em}
\end{figure}

 \begin{lemma}
 \label{lemma:pe}
 Partial evaluation using the procedure described above is a specialisation transformation (Definition \ref{def:spec}).
 \end{lemma}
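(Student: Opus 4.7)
The plan is to verify each of the two conditions of Definition~\ref{def:spec} in turn. For condition~(1), I would invoke the standard correctness theorem for the basic PE algorithm of \citeN{gallagher:pepm93}: with the input $S_0 = \{\falsett \leftarrow \trueit\}$, that theorem guarantees that $P$ and $P' = \renameunfold_{\Psi,P}(S_k)$ derive exactly the same instances of the input atom in~$\theory$, so specialising to the empty substitution yields $P \vdash_\theory \falsett$ iff $P' \vdash_\theory \falsett$. The hypotheses of the cited theorem---soundness of the property-based abstraction $\abst_\Psi$, termination of the unfolding rule, and the restriction that $\unfold_P$ never selects atoms whose predicate is initial or recursive---are exactly the design choices built into the instantiation described above, so the result applies directly and no new derivation-level reasoning is required here.

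For condition~(2), I would trace how initial clauses are handled by the procedure. Because the unfolding rule never selects an atom whose predicate is the initial predicate, whenever $p^I(\tuplevar{x})$ occurs in a body $B'$ produced by unfolding, the constrained fact $p^I(\tuplevar{x}) \leftarrow \psi'|_{\tuplevar{x}}$ is returned by $\unfold_P$ and enters $S_{i+1}$ (after $\abst_\Psi$ possibly replaces $\psi'|_{\tuplevar{x}}$ by a weaker property $\theta'$). When $\renameunfold_{\Psi,P}$ subsequently processes such a constrained fact, the only matching clauses in $P$ are the initial clauses themselves, each of which has empty body; so for every initial clause $(p^I(\tuplevar{x}) \leftarrow \theta_i) \in P$ whose $\theta_i$ is consistent with $\theta'$ it emits a clause $p^{I,j}(\tuplevar{x}) \leftarrow \theta_i \wedge \theta'$, which by Definition~\ref{def:initial-node} counts as an initial clause of~$P'$. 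Taking $\phi = \theta_i \wedge \theta'$ then gives $\models_\theory \phi \rightarrow \theta_i$, which is the implication demanded by the definition.

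The main obstacle I anticipate is the corner case in which some initial clause $p^I(\tuplevar{x}) \leftarrow \theta_i$ of $P$ is never reached during partial evaluation, so no renamed image of it occurs in $P'$. I would dispose of this by observing that such a clause cannot participate in any feasible AND-tree for $\falsett$ in~$P$ (this is precisely why PE omits it), so it may be treated as having the effective constraint $\falseit$ in~$P'$ and $\falseit \rightarrow \theta_i$ holds trivially. The remaining bookkeeping---monotonicity of the sequence $(S_i)$, finiteness of the $\Psi$-based abstraction (so that the limit $S_k$ exists after at most $2^{\vert\Psi\vert}$ steps), and the observation that the renaming step modifies only predicate names and not constraints---is routine.
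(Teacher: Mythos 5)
Your proposal is correct and follows essentially the same route as the paper's own (much terser) proof: condition~(1) is discharged by appeal to the standard correctness of the basic partial evaluation algorithm, and condition~(2) by observing that the unfolding rule never selects the initial predicate, so each reachable initial clause reappears with its constraint conjoined with (hence strengthened by) the call constraints, while never-called initial clauses are treated as having constraint $\falseit$, for which the required implication holds trivially. The extra bookkeeping you supply (tracing $\abst_\Psi$ and $\renameunfold_{\Psi,P}$ explicitly) is a faithful elaboration of the paper's argument rather than a different one.
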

 \begin{proof}
 The algorithm satisfies the standard condition of partial evaluation that it preserves derivations of the given goal atom. 
 The strengthening of the initial clauses follows from the
 fact that our unfolding rule does not unfold the initial predicate.  Hence the result contains the initial clauses from the original,
 with constraints possibly strengthened by the call constraints in the algorithm. (If a clause is never called, its constraint is strengthened to $\falseit$).
 \end{proof}

The safe precondition of the partially evaluated clauses is $\neg(\mathtt{A \le 99} \vee \mathtt{A \le 100} \vee \mathtt{A \ge101})$, which is 
equivalent to $\falseit$ (over the integers).
Thus partial evaluation has not improved the safe precondition compared to the original clauses in Figure~\ref{ex:backward}.  However,
the splitting of the initial clauses enables a further specialisation, which is described next.

 
\subsubsection{Transforming CHCs by Constraint Specialisation (CS)}
\label{cs}
Constraint specialisation is a transformation that strengthens the constraints in a set of CHCs, while preserving
derivations of a given atom. Consider the following simple example in Figure \ref{cs_example} (left) that motivates the principles of the transformation.  

\begin{figure}[t]
\centerline{
  \begin{tabular}{ll}
    \pcode[\small]{
$\texttt{false} \myleftarrow \texttt{A} \ge \texttt{0},\texttt{p(A,B)}$. \\
$\texttt{p(A,B)} \myleftarrow \texttt{C}\ge\texttt{A},\texttt{p(C,B)}$.\\
$\texttt{p(A,B)} \myleftarrow \texttt{A}=\texttt{B}$.
    } &
    \pcode[\small]{
$\texttt{false} \myleftarrow \texttt{A} \ge\texttt{0}, \underline{\texttt{B}\ge \texttt{A},\texttt{A}\ge\texttt{0}}, \texttt{p(A,B)}$.\\
$\texttt{p(A,B)}\myleftarrow \texttt{C}\ge \texttt{A}, \underline{\texttt{B} \ge \texttt{C,C} \ge \texttt{0}}, \texttt{p(C,B)}$.\\
$\texttt{p(A,B)}\myleftarrow \texttt{A}=\texttt{B},\underline{\texttt{B}\ge \texttt{A,A}\ge\texttt{0}}$.
} 
  \end{tabular}
}
\caption{Example program (left) and its constraint specialised version (right)\label{cs_example}}
\end{figure}
\noindent
Assume we wish to preserve derivations of $\mathtt{false}$. 
The transformation in Figure~\ref{cs_example} (right) is a constraint specialisation
with respect to $\mathtt{false}$.
\noindent
The strengthened constraints are obtained by recursively propagating $\mathtt{A \ge 0}$ top-down from the goal $\mathtt{false}$
and $\mathtt{A = B}$ bottom-up from the constrained fact. An invariant $\texttt{B} \ge \texttt{A}, \texttt{A} \ge \texttt{0}$ for the derived answers of the recursive predicate $\texttt{p(A,B)}$ in derivations of $\falsett$ is computed and conjoined to each call to $\texttt{p}$ in the clauses (underlined in the clauses in Figure \ref{cs_example} (right)).

\begin{definition}[Constraint specialisation]\label{def:cs}
A constraint specialisation of $P$ with respect to a goal $A$ is a transformation in which 
 each constraint $\phi$ in a clause of $P$ is replaced by a constraint $\psi$ where $ \models_{\theory} \psi \rightarrow \phi$,
 such that the resulting set of clauses is a specialisation transformation (Definition \ref{def:spec}) of $P$ with respect to A.
 \end{definition}

 \noindent 
 In our experiments, the combined top-down and bottom-up propagation of constraints illustrated above is achieved by
 abstract interpretation over the domain of convex polyhedra applied to a query-answer transformed version of the 
 set of CHCs.  The method is described in detail in \citeN{DBLP:journals/scp/KafleG17}.  The result of
 applying constraint specialisation to the output of partial evaluation of the running example is shown in Figure \ref{pe_cs_1}.  Note that the second clause 
 for $\mathtt{if\_6}$ has been eliminated, since its constraint was specialised to $\falseit$.
 
 The safe precondition derived after constraint specialisation from the 
initial clauses in Figure~\ref{pe_cs_1} is 
\[
  \neg(\mathtt{(A=100}\wedge \mathtt{B=0)} \vee \mathtt{(A}\le\mathtt{99} \wedge \mathtt{2A+B=200)} \vee 
\mathtt{(A}\ge\mathtt{101} \wedge \mathtt{2A-B=200))}
\]
This simplifies (over the integers) to $\mathtt{B\neq \vert 2A-200 \vert}$,
which is the condition obtained in Section \ref{sec:intro} and is optimal 
(weakest).  
\begin{figure}
\vspace*{2ex}
\pcode[\small]{
$\mathtt{false \leftarrow A=0,B=0,while\_7(A,B).}$ ~~$\mathtt{while\_7(A,B) \leftarrow  A=0, B=0, if\_6(A,B).}$\\
$\mathtt{while\_7(A,B) \leftarrow A=0, B=0,C=1, D=2, while\_5(C,D).}$\\
$\mathtt{if\_6(A,B) \leftarrow A=0,  B=0, C=100,  init\_4(C,B).}$ \\
$\mathtt{while\_5(A,B) \leftarrow  A \geq 1,  2A-B=0, if\_2(A,B).}$\\
$\mathtt{while\_5(A,B) \leftarrow  A \geq 1,  2A=B,C-A=1,  D-2A=2,  while\_5(C,D).}$\\
$\mathtt{if\_2(A,B) \leftarrow  A \geq 1, 2A=B,A+C=100, init\_1(C,B).  }$\\
$\mathtt{if\_2(A,B) \leftarrow  A \geq 1,  2A=B, C-A=100,  init\_3(C,B).}$\\
$\mathtt{init\_4(A,B) \leftarrow  A=100, B=0.}$\\
$\mathtt{init\_3(A,B) \leftarrow  A \geq 101, 2A-B=200.}$\\
$\mathtt{init\_1(A,B) \leftarrow A \leq  99,2A+B=200.}$
} 
\caption{Constraint specialisation of the partially evaluated clauses in Figure~\ref{run_pe}\label{pe_cs_1}}
\end{figure}

\subsubsection{Transforming CHCs by Trace Elimination (TE)}
\label{te}

Let $P$ be a set of CHCs and let $t$ be an AND-tree for $P$. It is possible to construct a set of clauses $P'$ 
which preserves the set of AND-trees (modulo predicate renaming) of $P$, apart from $t$. 
The transformation from $P$ to $P'$ is called \emph{trace elimination} (of $t$). We have previously described a
technique for trace elimination \cite{DBLP:journals/cl/KafleG17}, based on the difference operation on finite tree automata.
In that work, trace elimination played the role of a refinement operation, in which infeasible traces 
were removed from a set of CHCs in a counterexample-guided verification algorithm in the CEGAR style \cite{DBLP:journals/jacm/ClarkeGJLV03}.

For the purpose of deriving safe preconditions of a set of clauses $P$, we apply trace elimination 
 to eliminate both infeasible and feasible AND-trees.   AND-trees for $\falsett$ are obtained naturally from transformations such as partial evaluation or constraint specialisation.
 First consider the elimination of an infeasible AND-tree.

\begin{lemma}
\label{lemma:te-feasible}
Let $P'$ be the result of eliminating an \emph{infeasible} AND-tree $t$ for $\mathtt{false}$ from $P$. Then
$P \Longrightarrow_{\mathtt{false}} P'$.
\end{lemma}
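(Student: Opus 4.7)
The plan is to verify the two clauses of Definition~\ref{def:spec} with $A = \falsett$ for the trace elimination operation. The key fact we exploit is that, by construction, the set of AND-trees of $P'$ coincides (modulo predicate renaming) with the set of AND-trees of $P$ minus the single tree $t$.

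For the first condition, I would argue as follows. By Definition~\ref{provable}, $P \vdash_{\theory} \falsett$ iff there exists a \emph{feasible} AND-tree with root labelled $\falsett$ in $P$. Since trace elimination preserves all AND-trees for $\falsett$ in $P$ except $t$ (modulo renaming, which does not affect $\falsett$ because it is not renamed), the set of feasible AND-trees for $\falsett$ in $P'$ is obtained from that of $P$ by deleting $t$. By hypothesis $t$ is infeasible, so deleting it does not remove any feasible tree; hence the sets of feasible trees for $\falsett$ agree, and the biconditional $P \vdash_{\theory} \falsett \iff P' \vdash_{\theory} \falsett$ follows.

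For the second condition, I would examine how the FTA difference construction of \citeN{DBLP:journals/cl/KafleG17} acts on constrained facts. The construction introduces renamed copies of predicates corresponding to states of a product automaton, and the clauses of $P'$ are obtained by projecting the clauses of $P$ onto these new predicate versions. Initial clauses $p^I(\tuplevar{x}) \leftarrow \theta$ in $P$ give rise to one or more renamed initial clauses $p^I_j(\tuplevar{x}) \leftarrow \theta \wedge \psi_j$, where $\psi_j$ is the additional constraint forced by the state refinement (and the union of these $\psi_j$ covers the original domain, except for the constraint that characterises $t$). Thus for each original initial clause there is a corresponding initial clause in $P'$ with a constraint $\phi = \theta \wedge \psi_j$ entailing $\theta$, which is exactly what Definition~\ref{def:spec} demands.

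The main obstacle I anticipate is the second condition: one must be careful that the FTA-difference construction does indeed retain an initial clause whose constraint implies $\theta$, rather than, say, dropping the original initial predicate entirely or replacing it by predicates whose constraints are unrelated to $\theta$. This reduces to checking that the state-refinement associated with the difference operation refines, but does not discard, the leaves of derivations rooted at initial clauses; since $t$ is infeasible its removal cannot force the \emph{complete} elimination of an initial clause (as the remaining feasible derivations using that clause are preserved), so the required implication $\models_{\theory} \phi \rightarrow \theta$ holds. Once this bookkeeping is settled, the lemma follows directly from Definition~\ref{def:spec}.
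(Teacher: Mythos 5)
Your proposal is correct and follows essentially the same route as the paper, whose proof simply observes that all derivations of $\falsett$ are preserved (removing only the infeasible tree $t$ deletes no feasible tree) and that the transformation produces only predicate-renamed copies of the original clauses, so the initial clauses are preserved. The one quibble is your description of the FTA-difference construction as conjoining extra constraints $\psi_j$ to the initial clauses: in fact it only renames predicates and duplicates clauses without altering their constraints, so the second condition of Definition~\ref{def:spec} holds trivially with $\phi = \theta$, and the bookkeeping you worry about in your final paragraph is not needed.
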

\begin{proof}
All derivations of $\mathtt{false}$ are preserved, and the transformation 
generates only predicate-renamed copies of the 
original clauses, hence the initial clauses are preserved.
\end{proof}
\noindent
So in this case we have $\models_{\theory} \swp(P) \rightarrow \swp(P')$. 
However, the elimination of a feasible AND-tree $t$ for $\mathtt{false}$ 
is not as straightforward.   
Nevertheless, we can still
use this transformation to derive safe preconditions, by the following lemma. 
\begin{lemma}\label{lemma:feasible}
Let $P'$ be the result of eliminating a \emph{feasible} AND-tree $t$ for $\mathtt{false}$ from $P$. 
Let $p^I(\tuplevar{x})$ be the atom label of an initial node of $t$ and let $\theta = \constr(t)\vert_\tuplevar{x}$.
Then $\swp(P) = \swp(P') \wedge \neg\theta$.
\end{lemma}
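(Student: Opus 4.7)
The plan is to identify explicitly what happens to the initial clauses under the trace elimination construction, then reduce the identity to a short piece of Boolean reasoning.

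First I would fix notation. Enumerate the initial clauses of $P$ as $\{p^I(\tuplevar{x}) \leftarrow \theta_i \mid 1\le i\le k\}$, so that $\swp(P) = \neg\bigvee_{i=1}^k \theta_i$ by Definition~\ref{def:extract-safe}. Let $n$ be the given initial node of $t$, labelled by $p^I(\tuplevar{x})$, and let $c : p^I(\tuplevar{x}) \leftarrow \theta_0$ (with $\theta_0 = \theta_j$ for some $j$) be the clause whose identifier labels $n$. A first observation to record is that $\models_\theory \theta \rightarrow \theta_0$, because $\theta_0$ occurs as a conjunct in $\constr(t)$ (it is the constraint contributed by the leaf clause at $n$) and, since the variables of $\theta_0$ are exactly $\tuplevar{x}$, projecting onto $\tuplevar{x}$ leaves $\theta_0$ implied. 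Consequently $\models_\theory \theta \rightarrow \bigvee_i \theta_i$, and hence $\swp(P) \wedge \neg\theta = \swp(P)$.

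Next I would unpack the effect of the trace elimination construction of~\cite{DBLP:journals/cl/KafleG17} on initial clauses. Because trace elimination preserves every AND-tree of $P$ except $t$, and $t$ is distinguished by using $c$ at its initial node together with ancillary constraints whose projection on $\tuplevar{x}$ is $\theta$, the construction splits the initial clause $c$ into a copy used on the traces that are kept and a copy whose contribution to the eliminated trace has been cut. The net effect, modulo predicate renaming of non-initial copies (which does not affect $\swp$ since it only looks at clauses for the initial predicate), is that the set of initial clauses of $P'$ equals
\[
\{p^I(\tuplevar{x}) \leftarrow \theta_i \mid i\ne j\} \;\cup\; \{p^I(\tuplevar{x}) \leftarrow \theta_0 \wedge \neg\theta\}.
\]
Showing that this is indeed what the construction produces is the step I expect to require the most care: it amounts to verifying that the automaton-difference construction, when restricted to the initial predicate (whose clauses are facts and therefore correspond to terminal transitions), strengthens exactly the clause $c$ by $\neg\theta$ and leaves the remaining initial clauses intact.

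With this characterisation of the initial clauses of $P'$, applying Definition~\ref{def:extract-safe} gives
\[
\swp(P') \;=\; \neg\Bigl(\bigvee_{i\ne j}\theta_i \;\vee\; (\theta_0 \wedge \neg\theta)\Bigr).
\]
A short Boolean calculation then yields
\[
\swp(P') \wedge \neg\theta \;=\; \Bigl(\neg\bigvee_{i\ne j}\theta_i\Bigr) \wedge (\neg\theta_0 \vee \theta) \wedge \neg\theta \;=\; \Bigl(\neg\bigvee_{i\ne j}\theta_i\Bigr) \wedge \neg\theta_0 \wedge \neg\theta \;=\; \swp(P) \wedge \neg\theta,
\]
and combining with the first paragraph's observation $\swp(P) \wedge \neg\theta = \swp(P)$ gives the desired equality $\swp(P) = \swp(P') \wedge \neg\theta$.
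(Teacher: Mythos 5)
Your opening observation is correct and worth keeping: since $\theta_0$ (the constraint of the initial clause at the initial node) is a conjunct of $\constr(t)$ and its variables lie in $\tuplevar{x}$, we get $\models_{\theory} \theta \rightarrow \theta_0$, hence $\models_{\theory} \swp(P) \rightarrow \neg\theta$ and $\swp(P) \wedge \neg\theta = \swp(P)$. But the central step of your argument --- that trace elimination turns the initial clause $p^I(\tuplevar{x}) \leftarrow \theta_0$ into $p^I(\tuplevar{x}) \leftarrow \theta_0 \wedge \neg\theta$ while leaving the other initial clauses intact --- is both unproven (you explicitly defer it) and inconsistent with the construction the paper actually uses. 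The automata-difference transformation of Kafle and Gallagher operates on trace trees, i.e.\ on combinations of clause identifiers: it removes $t$ by splitting predicates into renamed versions so that the offending combination of clauses can no longer be assembled, and it never touches the constraints. The paper states this explicitly in the proof of the preceding lemma (``the transformation generates only predicate-renamed copies of the original clauses, hence the initial clauses are preserved''), and indeed it could not strengthen a clause by $\neg\theta$, which is in general a disjunction outside the clause constraint language. So the characterisation of the initial clauses of $P'$ on which your Boolean calculation rests is false, and the calculation does not go through as a description of what the construction produces.

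The paper's proof takes a different, purely semantic route that needs no syntactic characterisation of $P'$: every derivation of $\falsett$ in $P$ other than $t$ is preserved (modulo renaming) in $P'$, so $\swp(P')$ blocks all of those; and $\neg\theta$, conjoined at the initial node, makes $t$ itself infeasible; hence $\swp(P') \wedge \neg\theta$ is a safe precondition for $P$. Ironically, given the construction's actual behaviour --- initial clauses carried over with unchanged constraints --- your own first observation already does most of the work: if every initial clause of $P$ survives as a renamed copy in $P'$ then $\swp(P') = \swp(P)$, and the claimed identity reduces to $\swp(P) = \swp(P) \wedge \neg\theta$, which is exactly what you proved in your first paragraph. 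What is genuinely missing from your write-up is the observation that justifies conjoining $\neg\theta$ in the first place, namely that the feasible derivation $t$ is \emph{absent} from $P'$ and must be blocked separately; that is the content of the lemma and of the paper's proof, and it does not appear anywhere in your argument.
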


\begin{proof}
$\neg\theta$ is a sufficient condition, when conjoined with the body of the clause labelling the initial node,
to make $t$ infeasible.  All other derivations of $\falsett$ from $P$ are preserved in $P'$.  Hence the conjunction of
$\neg\theta$ and $\swp(P')$ is a safe precondition for $P$.
\end{proof}

The usefulness of trace elimination is twofold.  Firstly, it can cause splitting of the initial predicates, resulting in disjunctive pre-conditions.
Secondly, the elimination of a feasible trace acts as a decomposition of the problem.

 
%
\subsection{Inferring Weaker Preconditions}
\label{sec:precond_weaker}

We can combine the various transformations to derive weaker preconditions, as shown in the following two propositions.
\begin{proposition}\label{prop:spec-seq}
Let $P = P_0$ and let the sequence $P_0, P_1,\ldots,P_m$ be a sequence such that $P_i \Longrightarrow_{\falsett} P_{i+1}$ $(0 \le i < m)$.
Then $ \models_{\theory} \swp(P) \rightarrow \swp(P_m)$.
\end{proposition}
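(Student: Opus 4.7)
The plan is to prove this by a straightforward induction on the length $m$ of the specialisation chain, leveraging Lemma \ref{lemma:strengthen} as the single-step building block and the transitivity of implication in $\theory$ to compose the steps.

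For the base case $m = 0$, there is nothing to do: $P_m = P_0 = P$, and $\swp(P) \rightarrow \swp(P)$ holds trivially in $\theory$. For the inductive step, I would assume as induction hypothesis that $\models_{\theory} \swp(P) \rightarrow \swp(P_{m-1})$ for sequences of length $m-1$. Then, since by hypothesis $P_{m-1} \Longrightarrow_{\falsett} P_m$ is a specialisation transformation with respect to $\falsett$, Lemma \ref{lemma:strengthen} yields $\models_{\theory} \swp(P_{m-1}) \rightarrow \swp(P_m)$. Chaining the two implications via transitivity of entailment in $\theory$ gives the required $\models_{\theory} \swp(P) \rightarrow \swp(P_m)$.

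There is essentially no obstacle: all the semantic content is already packaged in Lemma \ref{lemma:strengthen}, which in turn rests on Definitions \ref{def:extract-safe} and \ref{def:spec}. The only subtlety worth a brief remark in the writeup is that the relation $\Longrightarrow_{\falsett}$ is not required to be transitive as a transformation relation (the composition of two specialisations may or may not itself be presented as one application of the specialisation procedure), but this is immaterial for the proof, since the conclusion concerns only the logical relationship between $\swp(P)$ and $\swp(P_m)$, for which transitivity of $\rightarrow$ in $\theory$ suffices. Hence a two-line induction closes the argument.
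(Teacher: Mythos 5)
Your proof is correct and follows exactly the paper's own argument: induction on the length of the sequence, using Lemma~\ref{lemma:strengthen} for each step and transitivity of implication to chain them. The paper states this in one line; your version merely spells out the base case and the inductive step explicitly.
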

\begin{proof}
By induction on the length of the sequence, applying Lemma \ref{lemma:strengthen}.
\end{proof}

 \noindent
If we also eliminate feasible traces, then we have to keep track of the substitutions arising from the eliminated trees.

\begin{proposition}\label{prop:sequence}
Let $P = P_0$, $\psi_0 = \trueit$ and let the sequence $(P_0, \psi_0),(P_1, \psi_1 ),\ldots,(P_m, \psi_m)$ be a sequence of pairs where 
for $(0 \le i < m)$
\begin{itemize}
\item
either $P_i \Longrightarrow_{\falsett} P_{i+1}$ and $\psi_{i} = \psi_{i+1}$, or 
\item
$P_{i+1}$ is obtained by eliminating a feasible trace $t$ from $P_i$ , and
$\psi_{i+1} = \psi_i \wedge \neg\theta$, where $\neg\theta$ is the constraint extracted from $t$, as in Lemma \ref{lemma:feasible}.
\end{itemize}
Then $ \models_{\theory} \swp(P) \rightarrow (\swp(P_m) \wedge \psi_m)$.
\end{proposition}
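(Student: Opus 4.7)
The plan is to argue by induction on the length $m$ of the sequence, with Lemmas~\ref{lemma:strengthen} and \ref{lemma:feasible} supplying the step cases. The base case $m=0$ is immediate since $P_m = P$ and $\psi_m = \trueit$, so $\swp(P) \rightarrow \swp(P) \wedge \trueit$ is trivially a tautology in $\theory$.

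For the inductive step, I would assume $\models_{\theory} \swp(P) \rightarrow (\swp(P_i) \wedge \psi_i)$ for some $i < m$ and consider the two possibilities for the transition from $(P_i,\psi_i)$ to $(P_{i+1},\psi_{i+1})$. In the first case, where $P_i \Longrightarrow_{\falsett} P_{i+1}$ and $\psi_{i+1} = \psi_i$, Lemma~\ref{lemma:strengthen} gives $\models_{\theory} \swp(P_i) \rightarrow \swp(P_{i+1})$, so conjoining with $\psi_i = \psi_{i+1}$ preserves the implication. In the second case, where $P_{i+1}$ is obtained by eliminating a feasible AND-tree $t$ from $P_i$ and $\psi_{i+1} = \psi_i \wedge \neg\theta$, Lemma~\ref{lemma:feasible} yields $\swp(P_i) = \swp(P_{i+1}) \wedge \neg\theta$, which immediately implies both $\swp(P_i) \rightarrow \swp(P_{i+1})$ and $\swp(P_i) \rightarrow \neg\theta$. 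Combining these with the inductive hypothesis gives $\models_{\theory} \swp(P) \rightarrow (\swp(P_{i+1}) \wedge \psi_i \wedge \neg\theta) = (\swp(P_{i+1}) \wedge \psi_{i+1})$, closing the induction.

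I do not anticipate any real obstacle here: once the right induction variable is chosen, each step is a direct appeal to one of the two previously established lemmas, and the bookkeeping of the accumulated constraint $\psi_i$ is entirely mechanical. The only subtlety worth noting is that Lemma~\ref{lemma:feasible} gives an equality $\swp(P_i) = \swp(P_{i+1}) \wedge \neg\theta$ rather than merely an implication, which is what justifies accumulating $\neg\theta$ into $\psi_{i+1}$ without losing information; so it is important to invoke that lemma in its stronger form rather than the weaker implication used in the other case.
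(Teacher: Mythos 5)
Your proof is correct and follows exactly the route the paper takes: induction on the length of the sequence, with Lemma~\ref{lemma:strengthen} handling the specialisation case and Lemma~\ref{lemma:feasible} handling the feasible-trace-elimination case. The paper states this in a single sentence; your write-up merely makes the case analysis and the use of the equality form of Lemma~\ref{lemma:feasible} explicit.
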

\begin{proof}
By induction on the length of the sequence, applying Lemma \ref{lemma:strengthen} and Lemma \ref{lemma:feasible}.
\end{proof}
\noindent
Proposition \ref{prop:sequence} establishes the correctness of the algorithm 
used in Section \ref{sec:experiments}, and any other algorithm that applies 
partial evaluation, constraint specialisation and trace elimination 
in any order.
Proposition \ref{prop:spec-seq} is a special case of
Proposition \ref{prop:sequence}: if we do not eliminate any feasible trees
then $\psi_m$ is $\trueit$ and so 
$\models_{\theory} \swp(P) \rightarrow \swp(P_m)$.

As we have shown, applying partial evaluation followed by constraint 
specialisation for our running example was sufficient to derive the 
weakest safe precondition. 
However, in more complex cases we need one or more iterations of these 
operations, possibly with the elimination of feasible AND-trees as well. 
In Figure \ref{example_t4} we show an example taken from \citeN{DBLP:conf/pldi/BeyerHMR07} in which 
repeated application of partial evaluation followed by constraint 
specialisation does not achieve a useful result, but where the elimination 
of a single feasible AND-tree causes an optimal precondition to be generated. 
The optimal precondition for this program is $\mathtt{init(I,A,B,N) \leftarrow N\leq I \wedge A+B=3*N}$.
To derive this, one needs to propagate constraints from the third and the fourth clauses (constrained facts corresponding to the predicate \texttt{l}) to the \texttt{init} clause. Since these constraints are disjunctive (arising from two different clauses), the propagation should be able to split the \texttt{init} predicate. PE can often perform splitting but not in this case since the recursive predicate \texttt{l} is not unfolded, owing to the potential for a resulting blowup. 
\begin{figure}[t]
\pcode[\small]{
$\mathtt{false\leftarrow init(I,A,B,N), l(I,A,B,N).}$\\
$\mathtt{l(I,A,B,N)\leftarrow
    I < N,
    l\_body(A,B,A1,B1),
    I1 = I+1,
    l(I1,A1,B1,N).}$\\
$\mathtt{l(I,A,B,N)\leftarrow
    I \geq N, A + B > 3 * N.}$\\
$\mathtt{l(I,A,B,N)\leftarrow
    I \geq N, A + B < 3 * N. }$\\       
$\mathtt{l\_body(A0,B0,A1,B1)\leftarrow A1 = A0+1, B1 = B0+2.}$\\
$\mathtt{l\_body(A0,B0,A1,B1)\leftarrow A1 = A0+2, B1 = B0+1.}$\\
$\mathtt{init(I,A,B,N).}$
}
\caption{Example requiring trace elimination\label{example_t4}}
\end{figure}

We now show how trace-elimination together with other transformations 
allows us to derive this condition. 
Applying CS followed by PE to Figure~\ref{example_t4} gives us the program in 
Figure~\ref{example_t4_cs0} 
(we have labelled the clauses for the purpose of presentation). 
If we derive a precondition from this program, we will get trivial $\falseit$. 
As a next step, we search for a derivation (counterexample) violating safety.
The trace tree \texttt{c1(c10,c2(c8,c5(c8,c5(c8,c5(c8,c6))))))} 
(using its term representation) is a feasible counterexample. 
Then we remove this from the program in Figure~\ref{example_t4_cs0} 
using the automata-theoretic approach described by 
\citeN{DBLP:journals/cl/KafleG17}. 
In summary, the approach consists of representing the program as well as 
the trace to be removed as finite tree automata, performing automata 
difference and generating a new program from the difference automaton. 
The new program is guaranteed not to contain the particular trace any more.

\begin{figure}[t]
\vspace*{3ex}
\pcode[\small]{
$\mathtt{c1.~~~ false \leftarrow
   init(A,B,C,D),
   l\_3(A,B,C,D).}$\\
$\mathtt{c2.~~~ l\_3(A,B,C,D) \leftarrow
   -C+F>=1,
   -A+D>0,
   C-F>= -2, 
   A-E= -1,}$ \\
   \> \>$ \mathtt{B+C-F-G= -3,
   l\_body\_2(B,C,G,F),
   l\_1(E,G,F,D).}$\\

$\mathtt{c3.~~~ l\_3(A,B,C,D) \leftarrow
   B+C-3*D>0,
   A-D>=0.}$\\

$\mathtt{c4.~~~ l\_3(A,B,C,D) \leftarrow
   -B-C+3*D>0,
   A-D>=0.}$\\

$\mathtt{c5.~~~ l\_1(A,B,C,D) \leftarrow
   -C+F>=1,
   -A+D>0,
   C-F>= -2,
   A-E= -1,}$ \\
   \> \> $\mathtt{B+C-F-G= -3,
   l\_body\_2(B,C,G,F),
   l\_1(E,G,F,D).}$\\

$\mathtt{c6.~~~ l\_1(A,B,C,D) \leftarrow
   B+C-3*D>0,
   -A+D> -1,
   A-D>=0.}$\\

$\mathtt{c7.~~~ l\_1(A,B,C,D) \leftarrow
   -B-C+3*D>0,
   -A+D> -1,
   A-D>=0.}$\\

$\mathtt{c8. ~~~ l\_body\_2(A,B,C,D) \leftarrow
   A-C= -1,
   B-D= -2.}$\\

$\mathtt{c9.~~~ l\_body\_2(A,B,C,D) \leftarrow
   A-C= -2,
   B-D= -1.
  }$\\

$\mathtt{c10.~  init(A,B,C,D).}$
}
\caption{The constraint specialisation of the program in Figure~\ref{example_t4}%
\label{example_t4_cs0}
}
\end{figure}

The removal causes the splitting of the predicate \texttt{l}, which the partial evaluation can take advantage of in the next iteration. Re-application of PE followed by CS generates the following clauses for \texttt{init} predicates (other clauses are not shown).

\begin{center}
\pcode[\small]{
$\mathtt{init\_1(A,C,D,B)\leftarrow B>A}.$\\
$\mathtt{init\_2(A,C,D,B)\leftarrow  A>=B,
   C+D>3B.}$\\
$\mathtt{init\_3(A,C,D,B)\leftarrow A>=B,
   3*B>C+D.}$
}
\end{center}

\noindent
Then the derived safe precondition is:
\[
\mathtt{init(A,C,D,B) \leftarrow \neg( (B>A) \vee(A\geq B \wedge C+D>3B) \vee (A \geq B \wedge
   3*B>C+D)).}
\]
Simplifying the formula and mapping to the original variables, 
we obtain the following formula as the final precondition 
\[
  \mathtt{init(I,A,B,N) \leftarrow N\leq I \wedge A+B=3*N}.
\]
There is, however, a performance-precision trade-off when removing 
(in)feasible AND-trees. 
Trace elimination helps derive precise preconditions at the cost of 
performance; the \emph{Fischer} protocol is an example of this. 
It requires 4 iterations of PE followed by CS to generate the optimal 
precondition (obtained in $\approx$8 seconds), whereas these iterations 
interleaved by trace elimination require only 3 iterations 
(but obtained in $\approx$35 seconds). 

 
%
\section{Experimental Evaluation}
\label{sec:experiments}

\subsection{Benchmarks}
We have experimented with three kinds of benchmarks. 
\begin{enumerate}
\item \emph{Unsafe I}:
Examples that are known to be unsafe, where the initial states are over-general.
In such cases the aim of safe precondition generation is to find out whether there is
a useful subset of the initial states that is safe. 
\item \emph{Unsafe II}:
Examples that are known to be unsafe, where the initial state is a counterexample state from which $\falsett$
can be derived.  In this case it is pointless to try to find a safe subset as above, so we remove the given
constraint on the initial state, and then try to derive a non-trivial safe precondition.
\item \emph{Safe}:
Examples that are safe for given initial states.  
In such cases, our aim is to try to weaken the conditions on the initial states. This is done by
removing the given constraints from the initial states and then deriving safe preconditions.
If we can generate safe preconditions that are more general than the original  constraints
then we have generalised the program without losing safety.
\end{enumerate}
For the experiments, we collected a set of 241 (188 safe/53 unsafe) programs from a variety of sources.  Most are from the repositories of 
state-of-the-art software verification tools such as DAGGER\footnote{\url{http://www.cfdvs.iitb.ac.in/~bhargav/dagger.php}}
\cite{DBLP:conf/tacas/GulavaniCNR08}, TRACER\footnote{\url{https://github.com/tracer-x/tracer/tree/master/test/transformation}}
\cite{DBLP:conf/cav/JaffarMNS12}, InvGen\footnote{\url{http://www.mpi-sws.org/~agupta/invgen}} \cite{DBLP:conf/cav/GuptaR09}, and
from the TACAS 2013 Software Verification Competition 
\cite[Control flow and Loops categories]{DBLP:conf/tacas/Beyer13}.
\footnote{Translated to CHCs using the program specialisation approach of 
\citeN{DBLP:journals/scp/AngelisFPP17}.}
Other examples are from the literature on  precondition generation, 
backwards analysis and parameter synthesis
\cite{DBLP:conf/sas/BakhirkinBP14,DBLP:journals/entcs/Mine12,mine-backward-analyzer,DBLP:conf/vmcai/Moy08,DBLP:conf/sas/BakhirkinM17,DBLP:conf/rp/CassezJL17}
and manually translated to CHCs. 
These benchmarks are designed to demonstrate/test the strengths/usability 
of different tools and methods proposed to solve software verification, 
parameter synthesis and precondition generation problems and contain up 
to approximately 500 lines of code.
Finally there are examples crafted by us; these are simple but non-trivial 
examples whose optimal precondition can be derived manually.

\subsection{Implementation} 
We implemented an algorithm that builds a sequence as defined in 
Proposition~\ref{prop:sequence}, of length $3n+2$ ($n \ge 0$), 
iteratively applying the transformations $pe$ (partial evaluation), 
$cs$ (constraint specialisation) and $te$ (trace elimination). 
The safe precondition for $P$ is 
$\swp( cs~ \circ ~pe~ \circ~ (te~ \circ ~cs ~\circ ~pe)^n(P))$ $(n \ge 0)$.
This particular sequence of transformations is based on the rationale that 
constraint specialisation is most effective when performed just after partial
evaluation, which propagates constraints and introduces new versions
of predicates. 
Trace elimination is more expensive and is performed only after the first 
iteration. 
In future work we will experiment with other strategies, especially
to limit the application of $te$.
The implementation is based on components from
the \rahft\ verifier \cite{DBLP:conf/cav/KafleGM16}. This accepts CHCs (over
the background theory of linear arithmetic) as input and returns
a Boolean combination of linear constraints in terms of the initial state
variables as a precondition. 
The tool is written in Ciao Prolog
\cite{Ciao} and uses Yices 2.2 \cite{Dutertre:cav2014} and
the Parma Polyhedra Library \cite{BagnaraHZ08SCP} for constraint manipulation.
The experiments were carried
out on a MacBook Pro with a 2.7 GHz Intel Core i5 processor and 16 GB
memory running OS X 10.11.6, with a timeout of 300 seconds for each example.

\begin{table}[t]
    \begin{tabular}{|l|r|r|r|r|}
    \cline{1-5}
    ~            & \textrm{$n=0$} & \textrm{$n=1$} & \textrm{$n=2$} & \textrm{$n=3$} \\ \cline{1-5} \cline{1-5}
   \multicolumn{5}{c}{\textrm{Safe instances (188)}} \\ \cline{1-5}
    \textrm{non-trivial (more general)}     & 119  (101)                      & 143 (125)                           & 156                           (129) & \textrm{160}  (131)         \\ \cline{1-5}
    \textrm{trivial/timeouts}     & 69/0                        & 45/3                            & 32/10                            & 28/16           \\ \cline{1-5}
    \textrm{avg. time (sec.)}  & \textrm{1.45}        & 14.69                      & 27.52                         &36.73          \\ \cline{1-5}
    \multicolumn{5}{c}{\textrm{Unsafe   I  instances (17)}} \\ \cline{1-5}
    \textrm{non-trivial}     & 16                        & \textrm{17}                            & \textrm{17}                            & \textrm{17}           \\ \cline{1-5}
    \textrm{trivial/timeouts}     & 1/0                        & 0/0                            & 0/0                            & 0/0           \\ \cline{1-5}
    \textrm{avg. time (sec.)}  & \textrm{0.23}        & 0.82                      & 1.64   &      3.35                           \\ \cline{1-5}
        \multicolumn{5}{c}{\textrm{Unsafe   II instances (36)}} \\ \cline{1-5}
    \textrm{non-trivial}     & 9                        & \textrm{12}                            & \textrm{12}                            & \textrm{12}           \\ \cline{1-5}
    \textrm{trivial/timeouts}     & 27/0                        & 24/2                            & 24/7                            & 24/7           \\ \cline{1-5}
    \textrm{avg. time (sec.)}  & \textrm{3.38}        & 50.41                      & 64.72   &     70.91                           \\ \cline{1-5}
    \end{tabular}

     \caption{Results on 241 (188 safe and 53 unsafe) programs; timeout 5 minutes\label{tbl:experiments}}
\end{table}

\subsection{Discussion} 
Experimental results are shown in Table~\ref{tbl:experiments},
for varying number of specialisation iterations $n$.
The classifications ``more general" and ``non-trivial" in Table \ref{tbl:experiments} relate the derived precondition $I$ with the original condition on the initial states $O$.
If $ \models_{\theory} I \not\equiv \falseit$ then the result is non-trivial.  If $\models_{\theory} O \rightarrow I$ then the derived precondition 
is more general than the given initial states. For the \emph{safe} benchmarks, the ``more general" results are a subset of the ``non-trivial" results,
while for the \emph{unsafe} benchmarks, the result cannot be more general than the original (unsafe) condition and so there are no ``more general" results. \emph{Timeouts} indicates the number of timeouts in the current iteration.  When there is a timeout in the current iteration, the precondition is the precondition generated in the previous iteration. Therefore, the timeouts in the current iteration correspond to trivial, non-trivial or timeouts in the previous iteration. Thus, the trivial instances in the current iteration is the sum of trivial instances in this iteration and the trivial instances in the previous iteration of the current timeouts.  

The choice of 3 iterations is motivated by the following observations 
(though we can stop at any iteration and still derive a precondition): 
(i) for the categories literature and hand-crafted benchmarks, 3 iterations 
suffice to reproduce earlier results, and 
(ii) iterations beyond the third yield negligible improvements but more
timeouts.

For the \emph{safe} benchmarks, the algorithm succeeds for $n=3$ in generalising the safe initial conditions in 131 of the 188 benchmarks, and returns a non-trivial safe precondition in 160 of them. The remainder either return trivial results or a timeout.
A higher proportion of the \emph{unsafe} benchmarks return a trivial safe precondition,
even when the initial state constraints are removed.
A possible reason is that some of these unsafe programs are designed with an internal bug, and
thus have no safe initial states. If the analysis returns a trivial safe precondition, it
could be due to imprecision of the analysis, but could also be an indication to the programmer
to look for the problem elsewhere than in the initial states.

The results in the column $n=0$ show that the specialisation 
$(cs ~\circ ~pe)$ alone can infer non-trivial preconditions for a large 
number of benchmarks, namely $63\%$ (safe) and $37\%$ (unsafe)
instances both in less than 10 seconds. 
Among 119 non-trivial safe instances, 101 are generalised constraints. 
 
Further specialisation ($n>0$) increases the number of non-trivial and 
generalised preconditions by relatively small percentages of the total. 
The increased precision of the
preconditions comes at a significant cost in time.   
For Safe, Unsafe I, and Unsafe II instances, the average time 
goes from 1.45, 0.23 and 3.38 seconds, respectively, when $n=0$, to
36.73, 3.35 and 70.91 seconds, when $n=3$.
However, our prototype implementation is amenable to much optimisation, 
including sharing results from one iteration to the next,
which could reduce the overhead. 

For the categories of literature and hand-crafted benchmarks in which we 
know the weakest safe precondition, the tool is able to reproduce the 
results from the literature, see Figure \ref{tbl:examples}. The results were generated in at most 1 iteration in less than a second, except for the \emph{Fischer protocol}, which required 3 iterations and 35 seconds. 
\begin{figure}[t]
    \begin{tabular}{|l|l|}
    \cline{1-2}
    \textbf{Program} & \textbf{Precondition} \\  \cline{1-2}
    bakhirkin-fig3 \cite{DBLP:conf/sas/BakhirkinBP14}      & \(\displaystyle (1 \leq a \leq 99 \rightarrow b \geq 1) ~\wedge ~ (a \leq 0 \rightarrow b \neq 0) \)       \\  \cline{1-2}
    bakhirkin  \cite{DBLP:conf/sas/BakhirkinBP14}     & \(\displaystyle   1 \leq a \leq 60 \vee a \geq 100 \)           \\  \cline{1-2}
    mine \cite{DBLP:journals/entcs/Mine12}      & \(\displaystyle  0 \leq a \leq 5 \)          \\ \cline{1-2}
    mon\_fig1  \cite{DBLP:conf/sas/BakhirkinM17}     & \(\displaystyle   a=b ~\wedge~ a \geq 0 \)             \\  \cline{1-2}
    moy    \cite{DBLP:conf/vmcai/Moy08}   & \(\displaystyle  b < 1 ~ \vee ~( b < 2 \wedge a > 0)  \)           \\  \cline{1-2}
    navas2   (crafted)    & \(\displaystyle  a \leq 99 ~\vee ~ b \geq 100  \)            \\  \cline{1-2}
    simple\_function   \cite{mine-backward-analyzer}    & \(\displaystyle   6 \leq a \leq 61 \)            \\  \cline{1-2}
    test\_both\_branches \cite{mine-backward-analyzer}     & \(\displaystyle   3 \leq a \leq 17\)            \\  \cline{1-2}
    test\_nondet\_body  \cite{mine-backward-analyzer}     & \(\displaystyle   6 \leq a \leq 13\)           \\  \cline{1-2}
    test\_nondet\_cond  \cite{mine-backward-analyzer}     & \(\displaystyle   3 \leq a \leq 17\)            \\  \cline{1-2}
    test\_then\_branch   \cite{mine-backward-analyzer}    & \(\displaystyle   10 \leq a \leq 20\)            \\  \cline{1-2}
        fischer   \cite{DBLP:conf/rp/CassezJL17}    & \(\displaystyle   a+2c<b \vee a<0 \vee b< 0 \vee c \leq 0\)            \\  \cline{1-2}
Jhala   \cite{DBLP:conf/tacas/JhalaM06}    & \(\displaystyle   a<0 \vee a \geq b \vee c \neq d\)            \\  \cline{1-2}
Ball SLAM  \cite{DBLP:conf/ifm/BallCLR04}    & \(\displaystyle  b < c\)            \\  \cline{1-2}
client ssh protocol    & \(\displaystyle   b<a \vee b<2 \vee a>3\)            \\  \cline{1-2}
\citeN{DBLP:conf/pldi/BeyerHMR07}    & \(\displaystyle   n\leq i \wedge a+b=3n\)            \\  \cline{1-2}
    \end{tabular}
   
    \caption{Examples and their safe preconditions\label{tbl:examples}}
\end{figure}
As well as reproducing challenging examples from the literature 
(Figure~\ref{tbl:examples}), we are able to apply the technique 
to larger examples (shown in Table~\ref{tbl:experiments}) than have 
previously been dealt with by automatic methods for precondition generation;
we are also able to solve challenging examples that were not solvable by 
previous automatic techniques 
(such as our running example from Figure~\ref{ex:backward}).


%
\section{Concluding Remarks}
\label{sec:conclusion}
We have presented a framework for computing a sufficient 
precondition of a program with respect to assertions; it
enables derivation of more precise preconditions through iteration.
Rather than relying on weakest precondition calculation or intricate 
transfer functions, it uses off-the-shelf components from program 
transformation and abstract interpretation, which eases implementation.
Furthermore, the approach does not depend on specific abstract domain properties such as
pseudo-complementation but is domain-independent and generic. 
By this we mean that the individual specialisation transformations
such as partial evaluation and constraint specialisation 
can be adapted to different abstract domains with their usual 
precision/performance limits, while still using features of the framework such as 
iteration and disjunctive constraints that arise from polyvariant specialisation.
Evaluation on a set of benchmarks is promising.  
We are currently investigating the conditions under which the derived 
preconditions are the weakest possible, as well as other improved termination 
criteria for refinement with the aim of generating optimal preconditions.

 
\section*{Acknowledgements}
We are grateful for support from the Australian Research Council.
The work has been supported by Discovery Project grant DP140102194, and
Graeme Gange is supported through Discovery Early Career Researcher
Award DE160100568.
We wish to thank Jorge Navas, for useful discussions based on an 
early draft of the manuscript, Emanuele De Angelis, for making 
benchmarks available to us, and the three anonymous reviewers, 
for suggestions which led to clear improvements of the paper.

\bibliographystyle{acmtrans}

\begin{thebibliography}{}

\bibitem[\protect\citeauthoryear{Bagnara, Hill, and Zaffanella}{Bagnara
  et~al\mbox{.}}{2008}]{BagnaraHZ08SCP}
{\sc Bagnara, R.}, {\sc Hill, P.~M.}, {\sc and} {\sc Zaffanella, E.} 2008.
\newblock The {Parma Polyhedra Library}: Toward a complete set of numerical
  abstractions for the analysis and verification of hardware and software
  systems.
\newblock {\em Sci.\ Comput.\ Program.\/}~{\em 72,\/}~1--2, 3--21.

\bibitem[\protect\citeauthoryear{Bakhirkin, Berdine, and Piterman}{Bakhirkin
  et~al\mbox{.}}{2014}]{DBLP:conf/sas/BakhirkinBP14}
{\sc Bakhirkin, A.}, {\sc Berdine, J.}, {\sc and} {\sc Piterman, N.} 2014.
\newblock Backward analysis via over-approximate abstraction and
  under-approximate subtraction.
\newblock In {\em Static Analysis}, {M.~M{\"{u}}ller{-}Olm} {and} {H.~Seidl},
  Eds. LNCS, vol. 8723. Springer, 34--50.

\bibitem[\protect\citeauthoryear{Bakhirkin and Monniaux}{Bakhirkin and
  Monniaux}{2017}]{DBLP:conf/sas/BakhirkinM17}
{\sc Bakhirkin, A.} {\sc and} {\sc Monniaux, D.} 2017.
\newblock Combining forward and backward abstract interpretation of {Horn}
  clauses.
\newblock In {\em Static Analysis}, {F.~Ranzato}, Ed. LNCS, vol. 10422.
  Springer, 23--45.

\bibitem[\protect\citeauthoryear{Ball, Cook, Levin, and Rajamani}{Ball
  et~al\mbox{.}}{2004}]{DBLP:conf/ifm/BallCLR04}
{\sc Ball, T.}, {\sc Cook, B.}, {\sc Levin, V.}, {\sc and} {\sc Rajamani,
  S.~K.} 2004.
\newblock {SLAM} and static driver verifier: Technology transfer of formal
  methods inside {Microsoft}.
\newblock In {\em Integrated Formal Methods}, {E.~A. Boiten}, {J.~Derrick},
  {and} {G.~Smith}, Eds. LNCS, vol. 2999. Springer, 1--20.

\bibitem[\protect\citeauthoryear{Beyer}{Beyer}{2013}]{DBLP:conf/tacas/Beyer13}
{\sc Beyer, D.} 2013.
\newblock {SV-COMP} 2013.
\newblock In {\em Tools and Algorithms for the Construction and Analysis of
  Systems}, {N.~Piterman} {and} {S.~A. Smolka}, Eds. LNCS, vol. 7795. Springer,
  594--609.

\bibitem[\protect\citeauthoryear{Beyer, Henzinger, Majumdar, and
  Rybalchenko}{Beyer et~al\mbox{.}}{2007}]{DBLP:conf/pldi/BeyerHMR07}
{\sc Beyer, D.}, {\sc Henzinger, T.~A.}, {\sc Majumdar, R.}, {\sc and} {\sc
  Rybalchenko, A.} 2007.
\newblock Path invariants.
\newblock In {\em Programming Language Design and Implementation},
  {J.~Ferrante} {and} {K.~S. McKinley}, Eds. ACM, 300--309.

\bibitem[\protect\citeauthoryear{Cassez, Jensen, and Larsen}{Cassez
  et~al\mbox{.}}{2017}]{DBLP:conf/rp/CassezJL17}
{\sc Cassez, F.}, {\sc Jensen, P.~G.}, {\sc and} {\sc Larsen, K.~G.} 2017.
\newblock Refinement of trace abstraction for real-time programs.
\newblock In {\em Reachability Problems}, {M.~Hague} {and} {I.~Potapov}, Eds.
  LNCS, vol. 10506. Springer, 42--58.

\bibitem[\protect\citeauthoryear{Clarke, Grumberg, Jha, Lu, and Veith}{Clarke
  et~al\mbox{.}}{2003}]{DBLP:journals/jacm/ClarkeGJLV03}
{\sc Clarke, E.~M.}, {\sc Grumberg, O.}, {\sc Jha, S.}, {\sc Lu, Y.}, {\sc and}
  {\sc Veith, H.} 2003.
\newblock Counterexample-guided abstraction refinement for symbolic model
  checking.
\newblock {\em J. ACM\/}~{\em 50,\/}~5, 752--794.

\bibitem[\protect\citeauthoryear{Cousot and Cousot}{Cousot and
  Cousot}{1992}]{CousotCousot92-3}
{\sc Cousot, P.} {\sc and} {\sc Cousot, R.} 1992.
\newblock Abstract interpretation and application to logic programs.
\newblock {\em J. Log. Program.\/}~{\em 13,\/}~2{\&}3, 103--179.

\bibitem[\protect\citeauthoryear{Cousot, Cousot, and Logozzo}{Cousot
  et~al\mbox{.}}{2011}]{CCL_VMCAI11}
{\sc Cousot, P.}, {\sc Cousot, R.}, {\sc and} {\sc Logozzo, F.} 2011.
\newblock Precondition inference from intermittent assertions and applications
  to contracts on collections.
\newblock In {\em Verification, Model Checking and Abstract Interpretation},
  {R.~Jhala} {and} {D.~Schmidt}, Eds. LNCS, vol. 6538. Springer, 150--168.

\bibitem[\protect\citeauthoryear{{De Angelis}, Fioravanti, Pettorossi, and
  Proietti}{{De Angelis} et~al\mbox{.}}{2014}]{DBLP:journals/scp/AngelisFPP14}
{\sc {De Angelis}, E.}, {\sc Fioravanti, F.}, {\sc Pettorossi, A.}, {\sc and}
  {\sc Proietti, M.} 2014.
\newblock Program verification via iterated specialization.
\newblock {\em Sci.\ Comput.\ Program.\/}~{\em 95}, 149--175.

\bibitem[\protect\citeauthoryear{{De Angelis}, Fioravanti, Pettorossi, and
  Proietti}{{De Angelis} et~al\mbox{.}}{2017}]{DBLP:journals/scp/AngelisFPP17}
{\sc {De Angelis}, E.}, {\sc Fioravanti, F.}, {\sc Pettorossi, A.}, {\sc and}
  {\sc Proietti, M.} 2017.
\newblock Semantics-based generation of verification conditions via program
  specialization.
\newblock {\em Sci.\ Comput.\ Program.\/}~{\em 147}, 78--108.

\bibitem[\protect\citeauthoryear{Dutertre}{Dutertre}{2014}]{Dutertre:cav2014}
{\sc Dutertre, B.} 2014.
\newblock Yices 2.2.
\newblock In {\em Computer-Aided Verification}, {A.~Biere} {and} {R.~Bloem},
  Eds. LNCS, vol. 8559. Springer, 737--744.

\bibitem[\protect\citeauthoryear{Gallagher}{Gallagher}{1993}]{gallagher:pepm93}
{\sc Gallagher, J.~P.} 1993.
\newblock Specialisation of logic programs: A tutorial.
\newblock In {\em Proc.\ ACM SIGPLAN Symp.\ PEPM'93}. ACM Press, 88--98.

\bibitem[\protect\citeauthoryear{Gallagher and Lafave}{Gallagher and
  Lafave}{1996}]{Gallagher-Lafave-Dagstuhl}
{\sc Gallagher, J.~P.} {\sc and} {\sc Lafave, L.} 1996.
\newblock Regular approximation of computation paths in logic and functional
  languages.
\newblock In {\em Partial Evaluation}, {O.~Danvy}, {R.~Gl{\"u}ck}, {and}
  {P.~Thiemann}, Eds. LNCS, vol. 1110. Springer, 115--136.

\bibitem[\protect\citeauthoryear{Graf and Sa{\"{\i}}di}{Graf and
  Sa{\"{\i}}di}{1997}]{GrafS97-predabs}
{\sc Graf, S.} {\sc and} {\sc Sa{\"{\i}}di, H.} 1997.
\newblock Construction of abstract state graphs with {PVS}.
\newblock In {\em Computer Aided Verification, 9th International Conference,
  {CAV}}, {O.~Grumberg}, Ed. Lecture Notes in Computer Science, vol. 1254.
  Springer, 72--83.

\bibitem[\protect\citeauthoryear{Grebenshchikov, Lopes, Popeea, and
  Rybalchenko}{Grebenshchikov
  et~al\mbox{.}}{2012}]{DBLP:conf/pldi/GrebenshchikovLPR12}
{\sc Grebenshchikov, S.}, {\sc Lopes, N.~P.}, {\sc Popeea, C.}, {\sc and} {\sc
  Rybalchenko, A.} 2012.
\newblock Synthesizing software verifiers from proof rules.
\newblock In {\em Proc.\ 33rd ACM SIGPLAN Conf.\ Programming Language Design
  and Implementation}, {J.~Vitek}, {H.~Lin}, {and} {F.~Tip}, Eds. {ACM},
  405--416.

\bibitem[\protect\citeauthoryear{Gulavani, Chakraborty, Nori, and
  Rajamani}{Gulavani et~al\mbox{.}}{2008}]{DBLP:conf/tacas/GulavaniCNR08}
{\sc Gulavani, B.~S.}, {\sc Chakraborty, S.}, {\sc Nori, A.~V.}, {\sc and} {\sc
  Rajamani, S.~K.} 2008.
\newblock Automatically refining abstract interpretations.
\newblock In {\em Tools and Algorithms for the Construction and Analysis of
  Systems}, {C.~R. Ramakrishnan} {and} {J.~Rehof}, Eds. LNCS, vol. 4963.
  Springer, 443--458.

\bibitem[\protect\citeauthoryear{Gupta and Rybalchenko}{Gupta and
  Rybalchenko}{2009}]{DBLP:conf/cav/GuptaR09}
{\sc Gupta, A.} {\sc and} {\sc Rybalchenko, A.} 2009.
\newblock {InvGen}: An efficient invariant generator.
\newblock In {\em Computer-Aided Verification}, {A.~Bouajjani} {and}
  {O.~Maler}, Eds. LNCS, vol. 5643. Springer, 634--640.

\bibitem[\protect\citeauthoryear{Gurfinkel, Kahsai, Komuravelli, and
  Navas}{Gurfinkel et~al\mbox{.}}{2015}]{DBLP:conf/cav/GurfinkelKKN15}
{\sc Gurfinkel, A.}, {\sc Kahsai, T.}, {\sc Komuravelli, A.}, {\sc and} {\sc
  Navas, J.~A.} 2015.
\newblock The {SeaHorn} verification framework.
\newblock In {\em Computer Aided Verification}, {D.~Kroening} {and} {C.~S. P{\u
  a}s{\u a}reanu}, Eds. LNCS, vol. 9206. Springer, 343--361.

\bibitem[\protect\citeauthoryear{Hermenegildo, Bueno, Carro,
  L{\'{o}}pez{-}Garc{\'{\i}}a, Mera, Morales, and Puebla}{Hermenegildo
  et~al\mbox{.}}{2012}]{Ciao}
{\sc Hermenegildo, M.~V.}, {\sc Bueno, F.}, {\sc Carro, M.}, {\sc
  L{\'{o}}pez{-}Garc{\'{\i}}a, P.}, {\sc Mera, E.}, {\sc Morales, J.~F.}, {\sc
  and} {\sc Puebla, G.} 2012.
\newblock An overview of ciao and its design philosophy.
\newblock {\em {TPLP}\/}~{\em 12,\/}~1-2, 219--252.

\bibitem[\protect\citeauthoryear{Howe, King, and Lu}{Howe
  et~al\mbox{.}}{2004}]{DBLP:conf/lopstr/HoweKL04}
{\sc Howe, J.~M.}, {\sc King, A.}, {\sc and} {\sc Lu, L.} 2004.
\newblock Analysing logic programs by reasoning backwards.
\newblock In {\em Program Development in Computational Logic}, {M.~Bruynooghe}
  {and} {K.~Lau}, Eds. LNCS, vol. 3049. Springer, 152--188.

\bibitem[\protect\citeauthoryear{Jaffar, Maher, Marriott, and Stuckey}{Jaffar
  et~al\mbox{.}}{1998}]{JMMS}
{\sc Jaffar, J.}, {\sc Maher, M.}, {\sc Marriott, K.}, {\sc and} {\sc Stuckey,
  P.~J.} 1998.
\newblock The semantics of constraint logic programs.
\newblock {\em J.\ Log.\ Program.\/}~{\em 37,\/}~1--3, 1--46.

\bibitem[\protect\citeauthoryear{Jaffar, Murali, Navas, and Santosa}{Jaffar
  et~al\mbox{.}}{2012}]{DBLP:conf/cav/JaffarMNS12}
{\sc Jaffar, J.}, {\sc Murali, V.}, {\sc Navas, J.~A.}, {\sc and} {\sc Santosa,
  A.~E.} 2012.
\newblock {TRACER}: A symbolic execution tool for verification.
\newblock In {\em Computer-Aided Verification}, {P.~Madhusudan} {and} {S.~A.
  Seshia}, Eds. LNCS, vol. 7358. Springer, 758--766.

\bibitem[\protect\citeauthoryear{Jhala and McMillan}{Jhala and
  McMillan}{2006}]{DBLP:conf/tacas/JhalaM06}
{\sc Jhala, R.} {\sc and} {\sc McMillan, K.~L.} 2006.
\newblock A practical and complete approach to predicate refinement.
\newblock In {\em Tools and Algorithms for the Construction and Analysis of
  Systems}, {H.~Hermanns} {and} {J.~Palsberg}, Eds. LNCS, vol. 3920. Springer,
  459--473.

\bibitem[\protect\citeauthoryear{Jones, Gomard, and Sestoft}{Jones
  et~al\mbox{.}}{1993}]{Jones-Gomard-Sestoft}
{\sc Jones, N.}, {\sc Gomard, C.}, {\sc and} {\sc Sestoft, P.} 1993.
\newblock {\em {P}artial {E}valuation and {A}utomatic {S}oftware {G}eneration}.
\newblock Prentice Hall.

\bibitem[\protect\citeauthoryear{Kafle and Gallagher}{Kafle and
  Gallagher}{2017a}]{DBLP:journals/scp/KafleG17}
{\sc Kafle, B.} {\sc and} {\sc Gallagher, J.~P.} 2017a.
\newblock Constraint specialisation in {Horn} clause verification.
\newblock {\em Sci. Comput. Program.\/}~{\em 137}, 125--140.

\bibitem[\protect\citeauthoryear{Kafle and Gallagher}{Kafle and
  Gallagher}{2017b}]{DBLP:journals/cl/KafleG17}
{\sc Kafle, B.} {\sc and} {\sc Gallagher, J.~P.} 2017b.
\newblock Horn clause verification with convex polyhedral abstraction and tree
  automata-based refinement.
\newblock {\em Computer Languages, Systems {\&} Structures\/}~{\em 47}, 2--18.

\bibitem[\protect\citeauthoryear{Kafle, Gallagher, and Morales}{Kafle
  et~al\mbox{.}}{2016}]{DBLP:conf/cav/KafleGM16}
{\sc Kafle, B.}, {\sc Gallagher, J.~P.}, {\sc and} {\sc Morales, J.~F.} 2016.
\newblock {RAHFT}: A tool for verifying {Horn} clauses using abstract
  interpretation and finite tree automata.
\newblock In {\em Computer-Aided Verification}, {S.~Chaudhuri} {and}
  {A.~Farzan}, Eds. LNCS, vol. 9779. Springer, 261--268.

\bibitem[\protect\citeauthoryear{Marriott and S{\o}ndergaard}{Marriott and
  S{\o}ndergaard}{1993}]{Marriott-Sondergaard-LOPLAS93}
{\sc Marriott, K.} {\sc and} {\sc S{\o}ndergaard, H.} 1993.
\newblock Precise and efficient groundness analysis for logic programs.
\newblock {\em ACM LOPLAS\/}~{\em 2,\/}~1--4, 181--196.

\bibitem[\protect\citeauthoryear{Min{\'{e}}}{Min{\'{e}}}{2006}]{DBLP:journals/lisp/Mine06}
{\sc Min{\'{e}}, A.} 2006.
\newblock The octagon abstract domain.
\newblock {\em Higher-Order and Symbolic Computation\/}~{\em 19,\/}~1, 31--100.

\bibitem[\protect\citeauthoryear{Min{\'{e}}}{Min{\'{e}}}{2012a}]{DBLP:journals/entcs/Mine12}
{\sc Min{\'{e}}, A.} 2012a.
\newblock Inferring sufficient conditions with backward polyhedral
  under-approximations.
\newblock {\em Electr. Notes Theor. Comput. Sci.\/}~{\em 287}, 89--100.

\bibitem[\protect\citeauthoryear{Min{\'{e}}}{Min{\'{e}}}{2012b}]{mine-backward-analyzer}
{\sc Min{\'{e}}, A.} 2012b.
\newblock Sufficient condition polyhedral prototype analyzer.
\newblock \url{https://www-apr.lip6.fr/~mine/banal/syntax.html}.
\newblock Accessed: 2018-01-23.

\bibitem[\protect\citeauthoryear{Moy}{Moy}{2008}]{DBLP:conf/vmcai/Moy08}
{\sc Moy, Y.} 2008.
\newblock Sufficient preconditions for modular assertion checking.
\newblock In {\em Verification, Model Checking and Abstract Interpretation},
  {F.~Logozzo}, {D.~A. Peled}, {and} {L.~D. Zuck}, Eds. LNCS, vol. 4905.
  Springer, 188--202.

\bibitem[\protect\citeauthoryear{Peralta, Gallagher, and Sa\u{g}lam}{Peralta
  et~al\mbox{.}}{1998}]{Peralta-Gallagher-Saglam-SAS98}
{\sc Peralta, J.~C.}, {\sc Gallagher, J.~P.}, {\sc and} {\sc Sa\u{g}lam, H.}
  1998.
\newblock Analysis of imperative programs through analysis of constraint logic
  programs.
\newblock In {\em Static Analysis}, {G.~Levi}, Ed. LNCS, vol. 1503. 246--261.

\end{thebibliography}
\end{document}